\newtheorem{thm}{Theorem}[section]
\newtheorem{definition}[thm]{Definition}
\newtheorem{theorem}[thm]{Theorem}
\newtheorem{lemma}[thm]{Lemma}
\newtheorem{corollary}[thm]{Corollary}
\begin{document}

\title{Central Configurations Formed By Two Twisted Regular Polygons}
\author{ \small\sc Xiang Yu\footnote{Email:xiang.zhiy@gmail.com} \small{and}
\small\sc Shiqing Zhang\footnote{Email:zhangshiqing@msn.com} \\
 \small \it Department of
Mathematics, Sichuan University,
 \small\it Chengdu 610064,  China}
\date{}
\maketitle
\begin{abstract}In this paper, we study the necessary conditions and sufficient conditions for the central configurations formed by two twisted regular polygons (one N-regular polygon and one L-regular polygon ). We wish to extend the results of the symmetrical  central configurations formed by two twisted N-regular polygons, however, it  will be  proved that there are not more central configurations in a more general setting than the central configurations considered for some  more particular situations before.

\end{abstract}

 \vskip 2mm

\noindent{\bf Keywords:} Twisted N+L-body problems, Central configurations.

 \vskip 2mm
 \noindent {\bf 2000AMS Mathematical Subject Classification:} 70F10,70F15.\\

\section{Introduction}
%\begin{CJK*}{GBK}{song}
Central configuration plays a very important role in many problems, such as Newtonian N-body problems. It is highly concentrated by mathematicians \cite{albouy2012finiteness,hampton2006finiteness,doi:10.1137/S0036141093248414,moeckel1990central,Palmore1975,saari1980role,smale1998mathematical,wintner1941analytical,Yu20122106}. To find  concrete central
configurations is very difficult, therefore we consider only some special situation, i.e., the central configurations formed by two twisted regular polygons, which are an extension of the results in \cite{doi:10.1137/S0036141093248414,Yu20122106}. The  motivation of this paper comes mainly from the results of \cite{doi:10.1137/S0036141093248414,Zhangperiodic,Yu20122106}.

\begin{definition}
A configuration $q = (q_1,q_2,\ldots,q_n)\in X \setminus \Delta$ is called a central configuration, if there exists a constant $\lambda \in \mathbb{R}$ such that
\begin{equation}\label{de}
\sum_{j = 1, j\neq k}^{n} \frac{m_j m_k}{|q_j - q_k|^3}(q_j - q_k) = -\lambda m_k q_k ,\quad 1\leqslant k \leqslant n.
\end{equation}
The value of $\lambda$ in (\ref{de}) is uniquely determined by
\begin{equation}\label{}
\lambda = \frac{U(q)}{I(q)},
\end{equation}
where
\begin{gather}\label{}
X = \{q= (q_1,q_2, \ldots, q_n) \in \mathbb{R}^{3n} | \sum_{i =1}^{n} m_i q_i = 0  \}, \\
\Delta = \{ q | q_j = q_k \,\, for\,some\,\, j\neq k  \},\\
U(q) = \sum_{1\leqslant j < k \leqslant n}{\frac{m_j m_k}{|q_j - q_k|}},\\
I(q) = \sum_{1\leqslant j \leqslant n} {m_j |q_j|^2}.
\end{gather}
\end{definition}
Consider the central configurations in $\mathbb{R}^3$ formed by one regular N-polygon and another regular L-polygon with distance
$h \geqslant 0 $(without loss of generality, we set $N\leqslant L$). It is assumed that the lower layer regular N-polygon lies in horizontal plane, and the upper
regular L-polygon parallels to the lower one and z-axis passes through both centers of two regular polygons. Suppose that the lower layer particles have masses $m_1,m_2,\ldots,m_N$
and the upper layer particles
have masses
 $\tilde{m}_1,\tilde{m}_2,\ldots,\tilde{m}_L$
respectively. For convenience, we may treat $\textbf{R}^3$ as the direct product of the complex plane and real axis when choosing the coordinates. Let $\rho_k$ be the $k-$th root of the N-roots of unity, $\xi_l$ be the $l-$th root of the L-roots of unity, i.e.,
%We consider the twisted N regular polygon and L regular polygon formed central configuration.
%The vertical distance of N regular polygon and L regular polygon is $h \geqslant 0 $.
%在$\mathbb{R}^3$中考虑由两个垂直距离为$ h \geqslant 0 $的扭转正N边形和正L边形构成的中心构型（不妨设$N\leqslant L$）。以下层正N边形的中心为原点，正N边形所在平面为$xy$坐标面（也可认为是复平面），过中心的垂线为$z$轴建立坐标系。
%令$\rho_k$是第$k$个$N$阶单位根，$\xi_l$是第$l$个$L$阶单位根，即
\begin{gather}\label{}
\rho_k = e^{i\theta_k},\\
\xi_l  = e^{i\varphi_l},
\end{gather}
and let
\begin{gather}\label{}
\tilde{\rho}_l = a \xi_l \cdot e^{i\theta},
\end{gather}
%where $a>0$, $\theta$
where $a > 0, i = \sqrt{-1}, \theta_k = \frac{2k\pi}{N} (k =1,2,\ldots, N),\varphi_l = \frac{2l\pi}{L} (l =1,2,\ldots, L), 0 \leqslant \theta \leqslant 2\pi$, $\theta$ is called twisted angle.\\
It is assumed that $m_1,m_2,\ldots,m_N$ locates at the vertex $q_k$ of the lower regular N-polygon; $\tilde{m}_1,\tilde{m}_2,\ldots,\tilde{m}_L$ locates at the vertex $\tilde{ q}_l$ of the upper regular L-polygon,
%设下层正N边形各顶点处的质点质量分别为$m_1,m_2,\ldots,m_N$，相应的坐标为$(\rho_1,0),(\rho_2, 0), \ldots, (\rho_N,0)$;上层正L边形各顶点处的质点质量分别为$\tilde{m}_1,\tilde{m}_2,\ldots,\tilde{m}_L$，相应的坐标为$(\tilde{\rho}_1,h),(\tilde{\rho}_2, h), \ldots, (\tilde{\rho}_L,h)$;
where
\begin{gather}\label{}
q_k = (\rho_k, 0 ),\quad\tilde{ q}_l = (\tilde{\rho}_l, h).
\end{gather}
The center of mass is
\begin{equation}\label{}
z_0 = \frac{\sum_j {m_j q_j} + \sum_l{\tilde{m}_l \tilde{q}_l }}{M},
\end{equation}
where
\begin{equation}\label{}
M = \sum_j {m_j} + \sum_l{\tilde{m}_l }.
\end{equation}
Let
\begin{gather}\label{}
P_k = q_k - z_0, \quad P = (P_1,P_2,\ldots, P_N), \\
\tilde{P}_l = \tilde{q}_l - z_0, \quad \tilde{P} = (\tilde{P}_1,\tilde{P}_2,\ldots, \tilde{P}_L).
\end{gather}
If $P_1,P_2\ldots, P_N; \,\tilde{P}_1,\tilde{P}_2\ldots, \tilde{P}_L$ form a central configuration, then $\exists \lambda \in \mathbb{R}^+$, such that
\begin{gather}\label{}
\sum_{j = 1, j\neq k  }^{N }\frac{m_j }{|P_k - P_j|^3} {(P_k  - P_j)} + \sum_{j = 1 }^{L}\frac{\tilde{m}_j }{|P_k - \tilde{P}_j|^3} (P_k - \tilde{P}_j ) = \lambda P_k, 1\leqslant k\leqslant N, \\
\sum_{j = 1, j\neq k  }^{L }\frac{\tilde{m}_j }{|\tilde{P}_l - \tilde{P}_j|^3} {(\tilde{P}_l  - \tilde{P}_j)} + \sum_{j = 1 }^{N}\frac{m_j }{|\tilde{P}_k - P_j|^3} (\tilde{P}_l - P_j ) = \lambda \tilde{P}_l, 1\leqslant l\leqslant L.
\end{gather}
In the following, we only consider the case of $m_1 = \cdots = m_N = m$ and $\tilde{m}_1 = \cdots = \tilde{m}_L = bm$. Then
\begin{equation}\label{}
z_0 = \sum_{j}{(m_jq_j + \tilde{m}_j \tilde{q}_j)/M} = (0,0,\frac{bLh}{N + bL}),
\end{equation}
and the necessary conditions and sufficient
conditions for $P_1,P_2\ldots, P_N; \,\tilde{P}_1,\tilde{P}_2\ldots, \tilde{P}_L$ forming a central configuration are
%of $P_1,P_2\ldots, P_N; \tilde{P}_1,\tilde{P}_2\ldots, \tilde{P}_L$
\begin{equation}\label{}
A + b\sum_{j =1}^{L}{ \frac{1- a\cos{(\frac{2\pi j }{L} - \frac{2\pi k }{N} + \theta ) } }{[1+ a^2 -2 a \cos{( \frac{2\pi j}{L } - \frac{2\pi k}{N} + \theta ) +h^2  ]^{\frac{3}{2}}  } }  }  = \mu , \quad (1 \leqslant k \leqslant N)
\end{equation}
\begin{equation}\label{}
bB + \sum_{j = 1}^{N} {\frac{ 1- a^{-1}\cos{(\frac{2\pi j}{N} - \frac{2 \pi k }{L } - \theta)}  }{[1 + a^2 - 2a \cos{(\frac{2\pi j }{N } - \frac{2\pi k }{L} - \theta  ) + h^2 }   ]^{\frac{3}{2}}} } = \mu, \quad (1 \leqslant k \leqslant L)
\end{equation}

\begin{equation}\label{equation3}
\sum_{j = 1}^{L } { \frac{ \sin{(\frac{2\pi j}{L } - \frac{2 \pi k}{N} + \theta  )}}{[1 + a^2 - 2a\cos{(\frac{2\pi j}{L} - \frac{2\pi k}{N } + \theta  )} + h^2   ]^{\frac{3}{2}}     }} = 0,\quad (1 \leqslant k \leqslant N)
\end{equation}

\begin{equation}\label{equation4}
\sum_{j = 1}^{N } { \frac{ \sin{(\frac{2\pi j}{N } - \frac{2 \pi k}{L } - \theta  )}}{[1 + a^2 - 2a\cos{(\frac{2\pi j}{N} - \frac{2\pi k}{L } - \theta  )} + h^2   ]^{\frac{3}{2}}     }} = 0,\quad (1 \leqslant k \leqslant L)
\end{equation}

\begin{equation}\label{}
h \sum_{j = 1}^{L } { \frac{1}{[1 + a^2 - 2a\cos{(\frac{2\pi j}{L} - \frac{2\pi k}{N } + \theta  )} + h^2   ]^{\frac{3}{2}}     }} = \frac{\mu L h}{N + bL},\quad (1 \leqslant k \leqslant N)
\end{equation}

\begin{equation}\label{}
h \sum_{j = 1}^{N } { \frac{1}{[1 + a^2 - 2a\cos{(\frac{2\pi j}{N} - \frac{2\pi k}{L } - \theta  )} + h^2   ]^{\frac{3}{2}}     }} = \frac{\mu N h}{N + bL},\quad (1 \leqslant k \leqslant L)
\end{equation}

where
\begin{gather}\label{}
A = \sum_{j =1}^{N-1}{\frac{1-\rho_j}{|1-\rho_j |^3}  } > 0,\quad B = \sum_{j =1}^{L-1}{\frac{1-\xi_j}{|1-\xi_j |^3}  } > 0,\quad\mu=\frac{\lambda}{m}.
\end{gather}
For the central configurations of this type , R. Moeckel and C. Simo \cite{doi:10.1137/S0036141093248414} proved the following results with condition $\theta=0$, $L=N$:
\begin{theorem}\label{Moeckel1}
(R. Moeckel and C.Simo). When $h=0,\theta=0$, for every mass ratio $b$, there are exactly two planar central configurations consisting of two nested regular N-polygon. For one of these, the ratio $a$ of the sizes of the two polygons is less than 1, and for the other it is greater than 1. However, for $N\geq 473$ there is a constant $b_0(N)<1$ such that for $b<b_0$ and $b>\frac{1}{b_0}$, the central configuration with the smaller masses on the inner polygon is a repeller.
\end{theorem}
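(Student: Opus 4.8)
The plan is to treat the two halves of the statement separately: first the existence and exact count of the planar nested-polygon central configurations, then the ``repeller'' assertion for $N\ge 473$.

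\emph{Reduction, symmetry, and the count.} Specialise the necessary-and-sufficient system above to $h=0$, $\theta=0$, $L=N$. Each argument $\tfrac{2\pi j}{L}-\tfrac{2\pi k}{N}+\theta$ then equals $\tfrac{2\pi(j-k)}{N}$, which runs over all residues mod $N$ as $j$ varies; since the numerators of (\ref{equation3})--(\ref{equation4}) are odd and the denominators even in that argument, both twist conditions hold identically, while the two $z$-conditions become $0=0$. The remaining radial conditions lose their $k$-dependence and reduce (using $B=A$) to the linear system in the unknowns $\mu,b$, with $a$ a parameter,
\begin{equation}
\mu=A+b\,\sigma_1(a),\qquad \mu=b\,a^{-3}A+\sigma_2(a),
\end{equation}
where $\sigma_1(a)=\sum_{s=0}^{N-1}(1-a\cos\theta_s)\,r_s^{-3}$ and $\sigma_2(a)=\sum_{s=0}^{N-1}(1-a^{-1}\cos\theta_s)\,r_s^{-3}$, with $\theta_s=2\pi s/N$ and $r_s^{2}=1+a^{2}-2a\cos\theta_s$ (the $s=0$ terms are $\tfrac{1-a}{|1-a|^{3}}$ and $\tfrac{1-a^{-1}}{|1-a|^{3}}$). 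A key point is that rescaling lengths by $a^{-1}$ and relabelling the two polygons is a symmetry of the problem exchanging $(a,b)\leftrightarrow(a^{-1},b^{-1})$, so it suffices to prove that for each $b>0$ there is exactly one solution with $a\in(0,1)$. Eliminating $\mu$ gives $b=\Psi(a):=\dfrac{A-\sigma_2(a)}{Aa^{-3}-\sigma_1(a)}$. Both $\sigma_1,\sigma_2$ come from one polygon potential, $\Phi(a):=\sum_{s=0}^{N-1}|1-a\rho_s|^{-1}$: indeed $\sigma_1=(a\Phi)'$ and $\sigma_2=-a^{-1}\Phi'$, and since $\Phi$ is increasing on $(0,1)$ one gets $\sigma_1>0>\sigma_2$ there, so the numerator of $\Psi$ is positive. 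The denominator $Aa^{-3}-\sigma_1(a)$ runs from $+\infty$ at $0^{+}$ to $-\infty$ at $1^{-}$ and is strictly decreasing, hence vanishes at a unique $a^{*}\in(0,1)$; thus $\Psi>0$ exactly on $(0,a^{*})$, with $\Psi(0^{+})=0$, $\Psi(a^{*-})=+\infty$, and $\mu=A+b\sigma_1(a)>0$ automatically. The claim is therefore equivalent to: $\Psi$ is strictly increasing on $(0,a^{*})$, for then $\Psi\colon(0,a^{*})\to(0,\infty)$ is a bijection and the symmetry supplies the unique $a>1$ solution.

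\emph{The main obstacle of Part 1.} Writing $\Psi=P/Q$ with $P=A-\sigma_2=A+a^{-1}\Phi'$ and $Q=Aa^{-3}-\sigma_1=Aa^{-3}-(a\Phi)'$ (both positive on $(0,a^{*})$), monotonicity is the inequality $P'Q-PQ'>0$, i.e.\ $(\log P)'>(\log Q)'$. Since $P,Q,P',Q'$ are explicit in $\Phi,\Phi',\Phi''$, this reduces to a convexity-type estimate for the single polygon potential $\Phi$ on $(0,a^{*})$, to be established uniformly in $N$. I expect this estimate --- handled term by term, with the $s=0$ term dominant near $a^{*}$ and near $1$ --- to be the technically delicate step, exactly as in the original argument of Moeckel and Sim\'o.

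\emph{The ``repeller'' statement.} After using the symmetry to assume $b\le 1$ and fixing the $a<1$ configuration above (light polygon inside), I would linearise the $N$-body flow at the associated relative equilibrium --- equivalently, compute the second variation of $U$ on the inertia ellipsoid, reduced by translations and rotations, which the $\mathbb{Z}_N$-symmetry further block-diagonalises over the $N$ Fourier modes. The resulting blocks have entries built from $A$, $\sigma_1$, $\sigma_2$ and their $a$-derivatives evaluated at $(a,\Psi(a))$, so whether the configuration is a repeller in the sense of \cite{doi:10.1137/S0036141093248414} becomes an explicit sign condition on a distinguished lowest-mode eigenvalue; one shows this sign flips at a single value $b_0(N)$, giving the interval $\big(b_0(N),b_0(N)^{-1}\big)$ of ``non-repeller'' mass ratios. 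The bound $N\ge 473$ enters through the limit $b\to 0$ (equivalently $a\to 0$): there the light polygon collapses onto the common centre and the configuration degenerates to the regular $N$-gon relative equilibrium, whose second-variation spectrum on the inertia ellipsoid is classically understood and changes character precisely at $N=473$. The real difficulty here is quantitative: the trigonometric sums involved --- notably $A$, which grows like $cN\log N$ --- must be estimated sharply enough to locate the crossover ratio and, in particular, to exclude it for $3\le N\le 472$.
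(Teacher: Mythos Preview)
The paper does not prove this theorem. Theorem~\ref{Moeckel1} is quoted verbatim from Moeckel--Sim\'o \cite{doi:10.1137/S0036141093248414} as background: the authors introduce it with ``R.~Moeckel and C.~Sim\'o \cite{doi:10.1137/S0036141093248414} proved the following results\ldots'' and give no argument whatsoever for it. The paper's own contribution is Theorem~\ref{MainResult} (that $N=L$ is forced), and all the proofs in Section~3 are aimed at that. So there is simply no ``paper's own proof'' of Theorem~\ref{Moeckel1} to compare your proposal against.

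As a separate remark on your outline itself: what you sketch is broadly in the spirit of the original Moeckel--Sim\'o argument --- reducing to a single scalar equation $b=\Psi(a)$ via the potential $\Phi$, using the $(a,b)\leftrightarrow(a^{-1},b^{-1})$ symmetry, and then a monotonicity/uniqueness analysis --- but several steps are only gestured at rather than carried out. In particular, the strict monotonicity of $\Psi$ on $(0,a^{*})$ is asserted but not established (you correctly flag it as the delicate step), and the ``repeller'' half is entirely programmatic: you describe linearising and block-diagonalising but do not actually compute the relevant eigenvalue or verify the $N=473$ threshold. If your intent was to reconstruct the Moeckel--Sim\'o proof, those are the two places where real work remains; if your intent was to match what the present paper does, the answer is that it does nothing here beyond citing the result.
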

\begin{theorem}\label{Moeckel2}
(R. Moeckel and C.Simo). When $h^2 >0,\theta=0$, if $N<473$, there is a unique pair of spatial central configurations of parallel regular N-polygon. If $N\geq 473$, here are no such central configurations for $b<b_0(N)$. At $b=b_0$ a unique pair bifurcates from the planar central configuration with the smaller masses on the inner polygon. This remains the unique pair of spatial central configurations until $b=\frac{1}{b_0}$, where a similar bifurcation occurs in reverse, so that for $b>\frac{1}{b_0}$, only the planar central configurations remain.
\end{theorem}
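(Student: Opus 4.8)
The plan is to reduce the listed necessary-and-sufficient conditions to an analytic system in two shape variables, to import from Theorem~\ref{Moeckel1} the description of its planar solutions, and then to follow the spatial ($h>0$) solutions that branch off the planar ones as the mass ratio $b$ varies; the sharp value $473$ will come out of comparing an explicit cosecant sum with $N$. Because $\theta=0$ and $L=N$, every summand depends only on $j-k$, so the equations are independent of $k$ and (\ref{equation3}), (\ref{equation4}) hold identically ($\sin$ is odd and summed over a full period). With $D_j=1+a^{2}-2a\cos\tfrac{2\pi j}{N}+h^{2}$, $S=\sum_{j=1}^{N}D_j^{-3/2}$, $C=\sum_{j=1}^{N}\cos(\tfrac{2\pi j}{N})D_j^{-3/2}$, $A=\tfrac14\sum_{j=1}^{N-1}\csc\tfrac{\pi j}{N}$, the horizontal equations read $A+b(S-aC)=\mu$ and $\tfrac{bA}{a^{3}}+S-\tfrac{C}{a}=\mu$ (using $B=A$; the $a^{-3}$ comes from the scaling of the outer polygon's self-force), and for $h\ne0$ the vertical equation is $\mu=(1+b)S$. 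Eliminating $\mu$ and putting $s=h^{2}$, a spatial central configuration is precisely a solution with $s>0$ of
\[
A=S(a,s)+ab\,C(a,s),\qquad \frac{A}{a^{3}}=S(a,s)+\frac{C(a,s)}{ab},
\]
which is real-analytic where $a>0$ and, if $s=0$, $a\ne1$; call it $F(a,s;b)=0$.

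At $s=0$ the vertical relation is vacuous and $F=0$ becomes the planar system, which by Theorem~\ref{Moeckel1} has for every $b$ exactly the two solutions $a_{-}(b)<1<a_{+}(b)$ with multipliers $\mu_{\pm}(b)$, analytic in $b$; the problem is invariant under $(a,s,b)\mapsto(1/a,s/a^{2},1/b)$, interchanging the branches. A spatial branch can meet the planar one only where the planar datum also satisfies $\mu_{\pm}(b)=(1+b)S(a_{\pm}(b),0)$; on the inner branch, eliminating $\mu$ and $C$ with the planar equations, this is $\nu(b)=0$ with
\[
\nu(b):=a_{-}(b)\bigl(A-S(a_{-}(b),0)\bigr)+b^{2}\bigl(a_{-}(b)^{3}S(a_{-}(b),0)-A\bigr).
\]
At the $b^{*}<1$ with $(b^{*})^{2}=a_{-}(b^{*})$ one has $\nu(b^{*})<0$; and a Lyapunov--Schmidt reduction shows the branching at a zero $b_0$ of $\nu$ is transcritical in $s=h^{2}$, so the spatial branch leaves $s=0$ to one side of $b_0$ carrying the reflected pair $h\mapsto\pm h$ (the ``pair'' in the statement). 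Any such $b_0$ lies below $b^{*}$, so $b_0<1$, and by the symmetry there is a matching bifurcation at $1/b_0>1$ off the outer branch.

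Differentiating $F$ in $s$ gives a monotonicity making the solution set a single smooth arc meeting each line $\{b=\text{const}\}$ at most once, so the spatial configuration, when it exists, is unique (as the pair $h\mapsto\pm h$) and the admissible $b$ form an interval, which is nonempty since it contains a neighbourhood of $b=1$. Whether it is all of $(0,\infty)$ or the proper subinterval $(b_0,1/b_0)$, $b_0<1$, is decided by whether $\nu$ vanishes on $(0,\sqrt{a_{-}})$. As $b\to0^{+}$, $a_{-}(b)\to0$ with $a_{-}(b)^{3}\sim\tfrac{2Ab}{2A+N}$ and $S(a_{-}(b),0)\to N$, so $\nu(b)$ has the sign of $A-N$ for small $b$; together with $\nu(b^{*})<0$ and a sign analysis of $\nu$ on $(0,b^{*})$, a zero $b_0$ exists iff $A>N$. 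Finally $A>N\iff\sum_{j=1}^{N-1}\csc\tfrac{\pi j}{N}>4N$, and from $\sum_{j=1}^{N-1}\csc\tfrac{\pi j}{N}=\tfrac{2N}{\pi}\bigl(\ln N+\gamma+\ln\tfrac{2}{\pi}\bigr)+O(1/N)$ this holds precisely for $N\ge473$ (the asymptotic gives threshold $\approx472.8$; a finite computation settles $N=472$ versus $473$), which yields all the assertions.

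I expect the crux to be this last comparison: proving $\sum_{j=1}^{N-1}\csc\tfrac{\pi j}{N}=\tfrac{2N}{\pi}\bigl(\ln N+\gamma+\ln\tfrac{2}{\pi}\bigr)+O(1/N)$ with a rigorously controlled remainder (Euler--Maclaurin after subtracting the two polar terms $\tfrac{N}{\pi j}$ and $\tfrac{N}{\pi(N-j)}$), and then a rigorous finite-range (interval-arithmetic) verification that $N=473$ --- not $472$ or $474$ --- is where $A-N$ changes sign. The remaining, more routine obstacles are the two sign/monotonicity statements (uniqueness of the spatial configuration, and exclusion of a zero of $\nu$ when $A\le N$) and the check that the transcritical branch at $b_0$ issues from the planar configuration with the smaller masses on the inner polygon and on the side $b>b_0$.
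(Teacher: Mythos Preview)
The paper does not prove this theorem. Theorem~\ref{Moeckel2} (like Theorem~\ref{Moeckel1}) is quoted from Moeckel and Sim\'o's 1995 paper as a background result; the present paper's own contribution is Theorem~\ref{MainResult}, and the only proofs given are for the auxiliary Lemmas~2.2--2.4, 3.1--3.2 and for Theorem~\ref{MainResult} itself. So there is nothing here to compare your argument against.

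That said, your sketch follows the genuine Moeckel--Sim\'o strategy: reduce to a two-variable analytic system, locate the planar branches from Theorem~\ref{Moeckel1}, and detect spatial bifurcation via the sign of $A-N$ with $A=\tfrac14\sum_{j=1}^{N-1}\csc\tfrac{\pi j}{N}$, which is indeed where the threshold $473$ comes from. The places where your outline is thin are exactly the hard parts of the original proof: the claim that ``differentiating $F$ in $s$ gives a monotonicity making the solution set a single smooth arc'' is the uniqueness statement and needs a real argument (Moeckel--Sim\'o use a careful analysis of the reduced one-dimensional map, not a bare monotonicity in $s$); the assertion that $\nu$ has at most one zero on $(0,b^{*})$ is not established by ``a sign analysis''; and the transcritical nature of the bifurcation with the correct side ($b>b_0$) requires computing the second-order terms in the Lyapunov--Schmidt reduction, which you have not done. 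These are not routine --- they are the substance of the Moeckel--Sim\'o paper --- so what you have is a correct roadmap rather than a proof.
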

X. Yu and S. Q. Zhang \cite{Yu20122106} proved the following results with condition $L = N$:
\begin{theorem}
If the central configuration is formed by two twisted regular N-polygon $(N\geq2)$ with distance $h\geq0$, then only $\theta=0$ or $\theta=\pi/N$. Specifically, if $a=1$ and $h=0$, i.e., two nested regular N-polygon are on the same unit circle, then only $\theta=\pi/N$.
\end{theorem}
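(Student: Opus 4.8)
The plan is to extract, from the whole system of central‑configuration equations, the only relations that actually constrain the twist angle $\theta$, namely the in‑plane equations \eqref{equation3} and \eqref{equation4}, and to reduce them to a single scalar condition. Write $\phi_j=2\pi j/N$ and $c=1+a^2+h^2$. Setting $L=N$ and reindexing $j\mapsto j-k$ (a bijection of $\mathbb Z/N\mathbb Z$), equation \eqref{equation3} collapses, for every $k$, to the single equation $F(\theta)=0$, where
\[
F(\theta):=\sum_{j=0}^{N-1}\frac{\sin(\theta+\phi_j)}{\bigl(c-2a\cos(\theta+\phi_j)\bigr)^{3/2}},
\]
while \eqref{equation4} yields $F(-\theta)=0$, which is the same because $F$ is odd (replace $j$ by $N-j$). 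Since $\sin\beta\,(c-2a\cos\beta)^{-3/2}=-\tfrac1a\tfrac{d}{d\beta}(c-2a\cos\beta)^{-1/2}$, one has $F=-\tfrac1a\,G'$ with $G(\theta):=\sum_{j=0}^{N-1}(c-2a\cos(\theta+\phi_j))^{-1/2}$, so the twist of a central configuration must be a critical point of $G$. As rotating the upper polygon by $2\pi/N$ reproduces the same configuration we may assume $\theta\in[0,2\pi/N)$; and since $G$ is $2\pi/N$‑periodic and even, $\theta=0$ and $\theta=\pi/N$ are always critical points (these are the asserted values). Thus the theorem reduces to showing that $G'(\theta)\neq0$ for $\theta\in(0,\pi/N)$, i.e.\ that $G$ is strictly monotone there.

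To attack this I would first discard the degenerate point and assume $(a,h)\neq(1,0)$, so that $c>2a$ and every summand of $G$ is smooth and positive. Using $(c-2a\cos\beta)^{-1/2}=\pi^{-1/2}\int_0^\infty e^{-ct}e^{2at\cos\beta}t^{-1/2}\,dt$ gives $G(\theta)=\pi^{-1/2}\int_0^\infty e^{-ct}\mathcal G_{2at}(\theta)\,t^{-1/2}\,dt$ with $\mathcal G_s(\theta):=\sum_{j=0}^{N-1}e^{s\cos(\theta+\phi_j)}$, so differentiating under the integral it suffices to prove, for every $s>0$,
\[
H_s(\theta):=-\tfrac1s\,\mathcal G_s'(\theta)=\sum_{j=0}^{N-1}\sin(\theta+\phi_j)\,e^{s\cos(\theta+\phi_j)}>0\qquad(0<\theta<\pi/N).
\]
Expanding $e^{s\cos\beta}=I_0(s)+2\sum_{n\ge1}I_n(s)\cos n\beta$ with all $I_n(s)>0$ and summing over the polygon gives $\mathcal G_s(\theta)=N\bigl(I_0(s)+2\sum_{\mu\ge1}I_{N\mu}(s)\cos N\mu\theta\bigr)$, so the statement is equivalently $\sum_{\mu\ge1}\mu I_{N\mu}(s)\sin(\mu x)>0$ on $x\in(0,\pi)$; in particular all Fourier coefficients of $G$ are positive, which one also sees directly by writing $c-2a\cos\beta=\tfrac aq(1-2q\cos\beta+q^2)$ with $q=\tfrac{c-\sqrt{c^2-4a^2}}{2a}\in(0,1)$ and taking the Cauchy product of the positive series $(1-qe^{\pm i\beta})^{-1/2}=\sum_k\binom{2k}{k}4^{-k}q^k e^{\pm ik\beta}$.

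For the positivity of $H_s$ I would pair the index $j$ with $N-j$. A direct computation gives, with $\phi=\phi_j$,
\[
\sin(\theta+\phi)e^{s\cos(\theta+\phi)}+\sin(\theta-\phi)e^{s\cos(\theta-\phi)}=2e^{s\cos\theta\cos\phi}\bigl[\sin\theta\cos\phi\,\cosh(s\sin\theta\sin\phi)-\cos\theta\sin\phi\,\sinh(s\sin\theta\sin\phi)\bigr],
\]
together with the unpaired term $j=0$ (value $\sin\theta\,e^{s\cos\theta}>0$) and, for even $N$, the self‑paired term $j=N/2$ (value $-\sin\theta\,e^{-s\cos\theta}<0$). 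Grouping these and using that $x\mapsto x^{-1}\sinh(sx)$ is strictly increasing on $(0,\infty)$ already settles small cases: for $N=2,4$ one gets $H_s(\theta)=2\bigl[\sin\theta\,\sinh(s\cos\theta)-\cos\theta\,\sinh(s\sin\theta)\bigr]$, which is positive on $(0,\pi/N)$ precisely because $\cos\theta>\sin\theta$ there. The main obstacle is to carry out this cancellation uniformly in $N$: I would try to organize the $\lfloor N/2\rfloor$ pairs in ``layers'' and induct, reducing at each step to a monotonicity/convexity inequality for the hyperbolic functions involved, but controlling the signs of the factors $\cos\phi_j$ uniformly is the delicate part. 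A parallel route is the Bessel reformulation $\sum_{\mu\ge1}\mu I_{N\mu}(s)\sin(\mu x)>0$: for $s$ bounded the coefficients decay like $(s/2)^{N\mu}/(N\mu-1)!$, so the $\mu=1$ term dominates and positivity is immediate, while for $s$ large Poisson summation turns the tail into a $2\pi$‑periodization of $y\mapsto y\,e^{-sy^2/(2N^2)}$, whose positivity on $(0,\pi)$ is elementary; bridging the intermediate range of $s$ is the residual difficulty.

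Finally, in the degenerate case $a=1$, $h=0$: here $\theta=0$ makes $\tilde q_l=(\xi_l,0)=q_l$ for $1\le l\le N$, i.e.\ the two regular $N$‑polygons coincide, which is a collision and hence not an admissible configuration; so among $\{0,\pi/N\}$ only $\theta=\pi/N$ survives, and it is collision‑free since the $2N$ vertices are then the $2N$‑th roots of unity. The exclusion of the open interval $(0,\pi/N)$ in this case follows from the same $H_s$‑estimate, which applies a fortiori at $q=1$ because the $j=0$ summand of $H_s$ blows up as $\theta\to0^+$ and dominates the bounded remaining terms throughout $(0,\pi/N)$, forcing $H_s>0$ there as well.
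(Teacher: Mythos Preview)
Your reduction is sound: the twist constraint is exactly $F(\theta)=0$ with $F=-G'/a$, and by the $2\pi/N$-periodicity and evenness of $G$ the theorem amounts to $F(\theta)>0$ on $(0,\pi/N)$. The integral representation and the Jacobi--Anger/Bessel rewriting are also correct. But the proposal is not a proof: you explicitly leave open the central positivity statement. For the pairing route you say ``controlling the signs of the factors $\cos\phi_j$ uniformly is the delicate part''; for the Bessel route you say ``bridging the intermediate range of $s$ is the residual difficulty.'' Neither branch is closed, and the sine-series form $\sum_{\mu\ge1}\mu\,I_{N\mu}(s)\sin(\mu x)>0$ is genuinely non-trivial (positive Fourier sine coefficients alone do \emph{not} force positivity on $(0,\pi)$; e.g.\ $\sin x+\sin 2x$ changes sign). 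Your degenerate-case paragraph also slips: the summands of $H_s$ are $\sin(\theta+\phi_j)e^{s\cos(\theta+\phi_j)}$, bounded for every fixed $s$, so ``the $j=0$ summand of $H_s$ blows up as $\theta\to 0^+$'' is false; what blows up is the $t$-integral of that summand, and that only gives domination near $\theta=0$, not throughout $(0,\pi/N)$.

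The paper's route (stated here as Lemmas~\ref{lemmaerror}--\ref{lemmayong}, with details in \cite{Yu20122106}) sidesteps exactly the obstacle you are stuck on. Instead of attacking $\alpha=3/2$ directly, one first proves positivity of
\[
g(x,\alpha)=\sum_{j=1}^N\frac{\sin(\theta_j+\theta)}{(1+a^2-2a\cos(\theta_j+\theta)+x)^\alpha}
\]
for all $x\ge0$ when $\alpha$ is \emph{sufficiently large}: near $t=0$ the leading Taylor term $\propto t^{N-1}\sin(N\theta)$ wins, while on the remaining compact $t$-interval the $\alpha$-th-root comparison $h(t,\alpha)$ is monotone in $\alpha$ and converges to a strictly positive limit, so Dini's theorem gives uniform positivity. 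One then descends to $\alpha=3/2$ by the elementary observation $\partial_x g(x,\alpha)=-\alpha\,g(x,\alpha+1)$ together with $g(+\infty,\alpha)=0$: positivity of $g(\cdot,\alpha+1)$ forces $g(\cdot,\alpha)$ strictly decreasing to $0$, hence positive. This large-$\alpha$-then-bootstrap strategy is the key idea your proposal is missing; it replaces your uniform-in-$s$ estimate by a compactness/monotonicity argument in the exponent and a one-line descent in $\alpha$.
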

\begin{corollary}\label{corollary1}
For $N\geq2,h=0$, if $a=1$, then $b=1$ and $\theta=\pi/N$, i.e., there is exactly one central configuration formed by two nested regular N-polygon on the same unit circle, which is the regular 2N-polygon.
\end{corollary}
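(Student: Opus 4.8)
The plan is to combine the preceding Theorem (the case $L=N$ of X. Yu and S. Q. Zhang) with a direct evaluation of the defining equations in the degenerate case $a=1$, $h=0$. \textbf{Step 1.} Since $a=1$ and $h=0$, the preceding Theorem already forces $\theta=\pi/N$; indeed, the only alternative it allows, $\theta=0$, would give $\tilde\rho_l=\xi_l=\rho_l$, so the two $N$-gons would coincide and the configuration would lie in $\Delta$, which is not admissible. Thus it suffices to prove that the mass parameter must satisfy $b=1$, and that the resulting configuration is the regular $2N$-gon.

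\textbf{Step 2 (reduction of the equations).} Put $L=N$, $a=1$, $h=0$, $\theta=\pi/N$ and write a typical phase as $\psi=\frac{2\pi(j-k)}{N}\pm\frac{\pi}{N}$. Then $1+a^2-2a\cos\psi+h^2=2(1-\cos\psi)=4\sin^2(\psi/2)$ and $1-a\cos\psi=1-\cos\psi=2\sin^2(\psi/2)$, so every summand of the first two scalar conditions collapses to $\frac{1}{4|\sin(\psi/2)|}$. As $j-k$ runs over a complete set of residues mod $N$, the half-phases $\psi/2$ run, modulo $\pi$, over the odd multiples $\frac{(2m+1)\pi}{2N}$, $m=0,\dots,N-1$, in both conditions; hence both sums equal the same constant
\[
S:=\sum_{m=0}^{N-1}\frac{1}{4\sin\frac{(2m+1)\pi}{2N}},
\]
independent of $k$. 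Since $L=N$ gives $B=A$, the two conditions become $A+bS=\mu$ and $bA+S=\mu$, and subtracting yields $(1-b)(A-S)=0$.

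\textbf{Step 3 (the inequality $A\ne S$ --- the main obstacle).} Using $|1-\rho_j|=2\sin\frac{j\pi}{N}$ and $1-\cos\frac{2j\pi}{N}=2\sin^2\frac{j\pi}{N}$ one gets $A=\sum_{j=1}^{N-1}\frac{1}{4\sin\frac{2j\pi}{2N}}$, a sum over the \emph{even} multiples of $\frac{\pi}{2N}$ in $(0,\pi)$, whereas $S$ is the analogous sum over the \emph{odd} multiples and has one extra term. I would compare the two sums term by term using that $x\mapsto 1/\sin x$ is strictly decreasing on $(0,\pi/2]$ and strictly increasing on $[\pi/2,\pi)$: pair the $A$-term at $\frac{2j\pi}{2N}$ with the $S$-term at $\frac{(2j-1)\pi}{2N}$ when $2j\le N$, and with the $S$-term at $\frac{(2j+1)\pi}{2N}$ when $2j>N$. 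In each pair the two half-phases lie on the same monotone branch and the $S$-phase is closer to $0$ or to $\pi$, so the $S$-term is strictly larger; the induced map on the odd indices $\{1,3,\dots,2N-1\}$ is injective, missing exactly the index closest to $N$ (whose term is positive). Hence $S>A$, so $A-S\neq0$ and therefore $b=1$. This inequality is the crux; everything else is bookkeeping.

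\textbf{Step 4 (converse and uniqueness).} Finally I would check that $N\ge2$, $h=0$, $a=1$, $\theta=\pi/N$, $b=1$ does produce a central configuration: the $2N$ equal masses then sit at the $2N$-th roots of unity, i.e.\ at the vertices of a regular $2N$-gon; equations \eqref{equation3} and \eqref{equation4} vanish because the involution $m\mapsto N-1-m$ sends each summand to its negative, the $z$-component equations are trivial since $h=0$, and the cosine conditions hold with $\mu=A+S>0$. This gives existence, while Steps 1--3 give uniqueness, so the regular $2N$-gon is the unique central configuration of the prescribed form.
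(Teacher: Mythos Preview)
Your argument is correct. Step 1 uses Theorem 1.4 exactly as intended; Step 2 is a clean reduction, and your computation $A=\sum_{j=1}^{N-1}\frac{1}{4\sin(j\pi/N)}$, $S=\sum_{m=0}^{N-1}\frac{1}{4\sin((2m+1)\pi/(2N))}$ is right. The pairing in Step 3 is the real content and it works: for $2j\le N$ both angles lie in $(0,\pi/2]$ with the $S$-angle strictly smaller, and for $2j>N$ both lie in $[\pi/2,\pi)$ with the $S$-angle strictly larger, so in each pair the $S$-term dominates; the induced map on odd indices is injective (the two image sets $\{1,3,\dots\}$ and $\{\dots,2N-1\}$ are disjoint) and misses exactly one positive term, giving $S>A$ and hence $b=1$. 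In Step 4 your involution $m\mapsto N-1-m$ does kill the sine sums (when $N$ is odd the fixed point $m=(N-1)/2$ gives $\sin\pi=0$, which you might mention), and the remaining conditions hold with $\mu=A+S$.

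As for comparison with the paper: note that this corollary is \emph{quoted} here from the earlier article \cite{Yu20122106} and is not proved in the present paper at all; it is listed among the known results in the introduction. So there is no ``paper's own proof'' to compare against. Your self-contained derivation --- reducing to $(1-b)(A-S)=0$ and then settling $S>A$ by an elementary monotonicity-and-pairing argument --- is a perfectly good direct proof; the strict inequality $S>A$ can alternatively be read off from the power-series/generating-function machinery the paper uses in Lemmas \ref{lemma3} and \ref{lemma4} (the same ``odd vs.\ even multiples of $\pi/(2N)$'' dichotomy appears there), but your comparison is shorter and more transparent for this particular purpose.
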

\begin{corollary}\label{corollary2}
The configuration formed by two twisted regular N-polygon $(N\geq2)$ with distance $h\geq0$ is a central configuration if and only if the parameters $a,b,h$ satisfy the following relationships:
i. When $h=0$ and $a\neq1$\\
\begin{equation}b\left[\sum_{1\leq j\leq N} \frac{1-a\cos(\theta_j)}{(1+a^2-2a\cos(\theta_j))^{3/2}}-\frac{A}{a^3}\right]=\sum_{1\leq j\leq N} \frac{1-a^{-1}\cos(\theta_j)}{(1+a^2-2a\cos(\theta_j))^{3/2}}-A
\end{equation}\\
or\\
\begin{equation}b\left[\sum_{1\leq j\leq N} \frac{1-a\cos(\theta_j+\frac{\pi}{N})}{(1+a^2-2a\cos(\theta_j+\frac{\pi}{N}))^{3/2}}-\frac{A}{a^3}\right]=\sum_{1\leq j\leq N} \frac{1-a^{-1}\cos(\theta_j+\frac{\pi}{N})}{(1+a^2-2a\cos(\theta_j+\frac{\pi}{N}))^{3/2}}-A .
\end{equation}\\
ii. When $h>0$\\
\begin{eqnarray}
\left\{
\begin{array}{c}
ba\sum_{1\leq j\leq N} \frac{\cos(\theta_j)}{(1+a^2-2a\cos(\theta_j)+h^2)^{3/2}}=A-\sum_{1\leq j\leq N} \frac{1}{(1+a^2-2a\cos(\theta_j)+h^2)^{3/2}}\\
ba\left(\frac{A}{a^3}-\sum_{1\leq j\leq N} \frac{1}{(1+a^2-2a\cos(\theta_j)+h^2)^{3/2}}\right)=\sum_{1\leq j\leq N} \frac{\cos(\theta_j)}{(1+a^2-2a\cos(\theta_j)+h^2)^{3/2}}
\end{array}
\right.
\end{eqnarray}\\
or\\
\begin{eqnarray}
\left\{
\begin{array}{c}
ba\sum_{1\leq j\leq N} \frac{\cos(\theta_j+\frac{\pi}{N})}{(1+a^2-2a\cos(\theta_j+\frac{\pi}{N})+h^2)^{3/2}}=A-\sum_{1\leq j\leq N} \frac{1}{(1+a^2-2a\cos(\theta_j+\frac{\pi}{N})+h^2)^{3/2}}\\
ba\left(\frac{A}{a^3}-\sum_{1\leq j\leq N} \frac{1}{(1+a^2-2a\cos(\theta_j+\frac{\pi}{N})+h^2)^{3/2}}\right)=\sum_{1\leq j\leq N} \frac{\cos(\theta_j+\frac{\pi}{N})}{(1+a^2-2a\cos(\theta_j+\frac{\pi}{N})+h^2)^{3/2}}.
\end{array}
\right.
\end{eqnarray}
\end{corollary}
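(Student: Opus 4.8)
The plan is to derive the Corollary from the Theorem immediately preceding it: substitute the two admissible twist angles into the central configuration equations and then eliminate the scalar multiplier $\mu$. By that Theorem, a central configuration formed by two twisted regular $N$-polygons $(L=N)$ must have $\theta=0$ or $\theta=\pi/N$; treating each case and taking the disjunction is why every item of the Corollary is an ``or'' of two alternatives. Fix $\theta\in\{0,\pi/N\}$, set $\theta_j=2\pi j/N$, and abbreviate
$$S_0=\sum_{j=1}^{N}\frac{1}{(1+a^2-2a\cos(\theta_j+\theta)+h^2)^{3/2}},\qquad S_1=\sum_{j=1}^{N}\frac{\cos(\theta_j+\theta)}{(1+a^2-2a\cos(\theta_j+\theta)+h^2)^{3/2}},$$
with $T_0,T_1$ the same sums when $h=0$. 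First I would factor $\rho_k$ (respectively $\tilde\rho_l=ae^{i\theta}\xi_l$) out of every summand in the six equations displayed above --- two of which equate a force-balance sum to $\mu$, two of which are the sine identities (\ref{equation3})--(\ref{equation4}), and two of which involve $h$. Because $L=N$, as $j$ runs over a complete residue system the angle $\tfrac{2\pi j}{N}-\tfrac{2\pi k}{N}+\theta$ runs over the fixed set $\{\theta_j+\theta\}$, independent of $k$, and the upper-layer sums involve $\{\theta_j-\theta\}$, which equals $\{\theta_j+\theta\}$ modulo $2\pi$ (trivially for $\theta=0$, and because both are the odd residues modulo $2N$ when $\theta=\pi/N$). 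Hence all six equations become $k$-independent. Moreover $\{\theta_j+\theta\bmod 2\pi\}$ is invariant under negation (the involution $j\leftrightarrow N-j$ for $\theta=0$, and $j\mapsto N-1-j$ for $\theta=\pi/N$), so the odd sums (\ref{equation3})--(\ref{equation4}) vanish identically and $S_1,T_1$ are well-defined real numbers. The same remark shows the configuration is collision-free precisely under the stated hypotheses: $h>0$, or $h=0$ with $a\neq1$ when $\theta=0$, while $\theta=\pi/N$ is always collision-free --- which is why item i excludes $a=1$ only in its $\theta=0$ alternative.

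Next I would compute the self-interactions. The lower self-force at $q_k$ is $A\rho_k$ by the definition of $A$. For the upper layer, $\tilde\rho_l-\tilde\rho_j=ae^{i\theta}(\xi_l-\xi_j)$ and $|\tilde\rho_l-\tilde\rho_j|=a|\xi_l-\xi_j|$ give the upper self-force $\tfrac{B}{a^3}\tilde\rho_l=\tfrac{A}{a^3}\tilde\rho_l$ (using $B=A$ when $L=N$); this is the origin of the $A/a^3$ terms in the Corollary. Substituting, the six equations reduce to: a horizontal lower equation $A+b(S_0-aS_1)=\mu$; a horizontal upper equation $\tfrac{bA}{a^3}+S_0-a^{-1}S_1=\mu$; and, when $h>0$, one vertical equation $S_0=\tfrac{\mu}{1+b}$. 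Here I would note that the center of mass $z_0=(0,0,\tfrac{bh}{1+b})$ forces the third coordinates of $P_k$ and $\tilde P_l$ to be $-\tfrac{bh}{1+b}$ and $\tfrac{h}{1+b}$, so that the two a priori distinct vertical equations both assert that the vertical separation of the layers equals $h$ and collapse to the single one displayed.

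Finally I would eliminate $\mu$. If $h=0$ there is no vertical equation, so $\mu$ drops out on equating the two horizontal equations: rearranging $A+b(T_0-aT_1)=\tfrac{bA}{a^3}+T_0-a^{-1}T_1$ gives exactly the first displayed identity of item i for $\theta=0$, and the same computation with $\theta=\pi/N$ gives the second. If $h>0$, substituting $\mu=(1+b)S_0$ into the two horizontal equations turns the lower one into $abS_1=A-S_0$ and the upper one into $S_1=ab\bigl(\tfrac{A}{a^3}-S_0\bigr)$, which is precisely the system of item ii for $\theta=0$, and $\theta=\pi/N$ gives the second system. Conversely, any $(a,b,h)$ satisfying one of these relations yields a consistent value $\mu=(1+b)S_0>0$, and every step above reverses, so a genuine central configuration results; the Theorem guarantees that no others are missed outside $\theta\in\{0,\pi/N\}$.

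The routine part of all this is trigonometric bookkeeping. The genuinely delicate points are two. First, establishing the negation-symmetry of $\{\theta_j+\pi/N\bmod 2\pi\}$ and the identity $\{\theta_j-\theta\}=\{\theta_j+\theta\}$, so that the odd equations hold automatically and both layers share the same $S_0,S_1$: a miscount here would manufacture spurious extra constraints. Second, the center-of-mass translation when $h>0$, whose sole effect is to make the two vertical equations coincide; this is exactly what collapses four surviving equations to three and leaves the two-equation systems of item ii rather than an overdetermined, unsolvable one.
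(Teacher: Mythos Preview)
The paper does not give its own proof of this corollary; it is quoted verbatim from the authors' earlier work \cite{Yu20122106}. So there is nothing in the present paper to compare your argument against. That said, your approach---specialize the six central-configuration equations to $L=N$, feed in the two admissible twist angles from the preceding theorem, exploit the negation-symmetry of the angle set $\{\theta_j+\theta\}$ to kill the sine equations and the $k$-dependence, and then eliminate $\mu$---is the natural one and is essentially what the cited paper does. Your computation of the upper self-force as $(B/a^{3})\tilde\rho_l$ is correct and is precisely what produces the $A/a^{3}$ term in the corollary; in fact this shows that the displayed equation (1.19) in the present paper is missing a factor $1/a^{3}$ on its $bB$ term, though the corollary itself (coming from \cite{Yu20122106}) has it right.

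Two small slips to tidy up. First, your remark that ``item i excludes $a=1$ only in its $\theta=0$ alternative'' misreads the statement: the hypothesis $a\neq 1$ is placed globally in item i, covering both alternatives; the case $a=1$, $h=0$, $\theta=\pi/N$ is handled separately in Corollary~\ref{corollary1}. Second, in the converse direction for $h=0$ there is no vertical equation, so $\mu$ is not recovered as $(1+b)S_0$; rather, the single relation of item i says exactly that the two horizontal expressions agree, and $\mu$ is defined to be their common value. Neither of these affects the substance of your argument.
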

\begin{corollary}\label{corollary3}
For $N\geq2,h>0,a=1$, if the configuration formed by two twisted regular N-polygon $(N\geq2)$ with distance $h\geq0$ is a central configuration, then $b=1,\theta=0~\mbox{or}~\pi/N$, and there exists a unique $h$ for each $\theta$. In other words, there are exactly two spatial central configurations formed by parallel regular N-polygon which have the same sizes.
\end{corollary}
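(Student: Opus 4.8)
The plan is to push the necessary-and-sufficient system of Corollary~\ref{corollary2}(ii) as far as it will go once $a=1$. Since $h>0$ we are in case (ii), and the preceding Theorem already forces $\theta=0$ or $\theta=\pi/N$. Write $\alpha_j:=\theta_j$ in the first branch and $\alpha_j:=\theta_j+\pi/N$ in the second, and set
$$T_0(h):=\sum_{j=1}^{N}\frac{1}{(2-2\cos\alpha_j+h^2)^{3/2}},\qquad T_1(h):=\sum_{j=1}^{N}\frac{\cos\alpha_j}{(2-2\cos\alpha_j+h^2)^{3/2}}.$$
At $a=1$ (so $A/a^3=A$ and $B=A$) the two equations of Corollary~\ref{corollary2}(ii) collapse to
$$b\,T_1=A-T_0,\qquad b\,(A-T_0)=T_1 ,$$
and substituting the first into the second gives $(b^2-1)\,T_1=0$.

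The decisive step is the sign of $T_1$. With $g(x):=(2-2x+h^2)^{-3/2}$, which is strictly increasing on $[-1,1]$ for $h>0$, one has $T_1=\sum_j(\cos\alpha_j)\,g(\cos\alpha_j)$ while $\sum_j\cos\alpha_j=0$; since $\cos\alpha_j$ and $g(\cos\alpha_j)$ are similarly ordered, Chebyshev's sum inequality yields $T_1\ge 0$, with equality \emph{iff} all the $\cos\alpha_j$ coincide. For $N\ge 3$ (and for $N=2,\ \theta=0$) the $\cos\alpha_j$ are never all equal, so $T_1(h)>0$ for every $h>0$, hence $b=1$. I expect the care in the statement to matter exactly here: for $N=2,\ \theta=\pi/2$ one has $\cos\alpha_1=\cos\alpha_2=0$, so $T_1\equiv 0$, the four bodies form a regular tetrahedron once $h$ is fixed, and \emph{every} $b$ works — this degenerate case must be excluded or acknowledged, since there $b=1$ is not forced.

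With $b=1$ the two equations coincide into the single balance relation $T_0(h)+T_1(h)=A$, which, using $1+\cos\alpha_j=2\cos^2(\alpha_j/2)$ and $2-2\cos\alpha_j=4\sin^2(\alpha_j/2)$, reads
$$\Phi(h):=\sum_{j=1}^{N}\frac{2\cos^2(\alpha_j/2)}{\bigl(4\sin^2(\alpha_j/2)+h^2\bigr)^{3/2}}=A ;$$
the remaining four equations of the original system are then automatically verified with $\mu=2T_0>0$. Each summand of $\Phi$ is strictly decreasing in $h$ on $(0,\infty)$ and $\Phi(h)\to 0$ as $h\to\infty$, so it suffices to show $\lim_{h\to0^+}\Phi(h)>A$. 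For $\theta=0$ one of the angles is $\alpha_j=0$, so $\Phi(h)\ge 2/h^3\to+\infty$ and we are done. For $\theta=\pi/N$ the limit equals $\tfrac14\sum_j\cos^2(\alpha_j/2)/\sin^3(\alpha_j/2)$, which I would compare with $A=\tfrac14\sum_{l=1}^{N-1}1/\sin(l\pi/N)$: the single index $j$ with $\alpha_j/2=\pi/(2N)$ already contributes at least $\tfrac14\cos^2(\pi/(2N))(2N/\pi)^3\ge N^3/\pi^3$ (using $\sin x\le x$ and $\cos^2(\pi/(2N))\ge\tfrac12$), and this exceeds $A\le\tfrac18 N(N-1)<\tfrac18 N^2$ once $N\ge 4$, while $N=2,3$ are checked by hand. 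The intermediate value theorem together with the monotonicity of $\Phi$ then gives a unique admissible $h\in(0,\infty)$ for each of the two values of $\theta$, hence exactly two spatial central configurations.

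The only genuinely nontrivial point is the last one — the trigonometric inequality $\lim_{h\to0^+}\Phi(h)>A$ in the twisted case $\theta=\pi/N$ — together with the bookkeeping of the degenerate tetrahedron at $N=2,\ \theta=\pi/2$; the algebra leading to $(b^2-1)T_1=0$, the Chebyshev positivity of $T_1$, and the monotonicity of $\Phi$ are all routine.
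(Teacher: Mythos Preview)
The paper does not actually prove Corollary~\ref{corollary3}; it is quoted verbatim from the authors' earlier article \cite{Yu20122106}, so there is no ``paper's own proof'' here to compare against. What follows is therefore an assessment of your argument on its own merits.

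Your reduction is correct and efficient. Specialising Corollary~\ref{corollary2}(ii) to $a=1$ indeed collapses the system to $bT_1=A-T_0$ and $b(A-T_0)=T_1$, whence $(b^2-1)T_1=0$; the Chebyshev step for $T_1>0$ is clean (similarly ordered sequences with $\sum\cos\alpha_j=0$), and the subsequent analysis of $\Phi(h)=T_0(h)+T_1(h)$---strict monotonicity, decay at infinity, and the lower bound at $h\to 0^+$---is carried out soundly. Your quantitative estimate $\Phi(0^+)\ge N^3/\pi^3>A$ for $N\ge 4$ via $\sin x\le x$ and $A\le N(N-1)/8$ is accurate, and $N=2,3$ are genuinely trivial to verify directly. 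The consistency check $\mu=2T_0$ is the right way to confirm that the reduced equation really does solve the full system.

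You are also right to flag the exceptional case $N=2,\ \theta=\pi/N=\pi/2$: there $\cos\alpha_1=\cos\alpha_2=0$, so $T_1\equiv 0$ and the system forces only $T_0=A$, i.e.\ $h=\sqrt2$, while $b$ remains free. Geometrically this is the regular tetrahedron, a central configuration for \emph{any} choice of masses, so the assertion ``$b=1$'' in the corollary is literally false at $N=2$; only the uniqueness of $h$ survives there. This is a genuine oversight in the stated corollary that your argument exposes rather than a defect in your proof.
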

In this paper, we will prove the following main result:
\begin{theorem}\label{MainResult}
If $P_1,P_2\ldots, P_N; \tilde{P}_1,\tilde{P}_2\ldots, \tilde{P}_L$ form a central configuration, then $N=L$, i.e., two stacked regular polygons  forming a symmetrical central configuration (considered by us) have the same shape.
\end{theorem}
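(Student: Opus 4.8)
The plan is to convert the six scalar balancing conditions into Fourier‑analytic statements about a single positive kernel and then to exploit positivity together with a roots‑of‑unity filter. Write $\theta_k=\tfrac{2\pi k}{N}$, $\varphi_l=\tfrac{2\pi l}{L}$, $d=\gcd(N,L)$, $N=dN'$, $L=dL'$, so that $\gcd(N',L')=1$ and $N'\le L'$ (as $N\le L$), and set
\[
w(x)=\bigl[\,1+a^{2}+h^{2}-2a\cos x\,\bigr]^{-3/2}=\sum_{n\in\mathbb{Z}}c_n e^{inx},\qquad c_{-n}=c_n .
\]
First I would record that in the non‑degenerate range $(1-a)^{2}+h^{2}>0$ one has $c_n>0$ for every $n$, that $(c_n)_{n\ge0}$ is strictly decreasing, and that $c_n\to0$ geometrically; this follows from $c_n=\frac{1}{\Gamma(3/2)}\int_0^\infty t^{1/2}e^{-(1+a^{2}+h^{2})t}\,I_n(2at)\,dt$ and standard monotonicity and asymptotics of the modified Bessel functions. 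When $h=0$ I would instead use $\delta_n:=a c_{n-1}-c_n$: the identity $\sum_n\delta_n e^{inx}=-(1-ae^{ix})^{-1/2}(1-ae^{-ix})^{-3/2}$ shows that the $\delta_n$ are all of one sign and decay geometrically (for $a<1$; replace $a$ by $1/a$ if $a>1$).

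Next I would run the roots‑of‑unity filter. Factoring $e^{i\theta_k}$ out of the $k$‑th equation, the sine equation~(\ref{equation3}), the first ``$=\mu$'' equation, and (for $h>0$) the first vertical equation combine to say that each of the three sums $\sum_{j=1}^{L} e^{i\varepsilon(\varphi_j+\theta-\theta_k)}\,w(\varphi_j+\theta-\theta_k)$, $\varepsilon\in\{-1,0,1\}$, is independent of $k$. Since $e^{ix}w(x),\,w(x),\,e^{-ix}w(x)$ have $n$‑th Fourier coefficients $c_{n-1},\,c_n,\,c_{n+1}$, and summation over the $L$‑th roots of unity kills every harmonic that is not a multiple of $L$, this is exactly the statement that $\sum_{r\in\mathbb{Z}}c_{rL+\varepsilon}e^{irL(\theta-\theta_k)}$ is independent of $k$ for each $\varepsilon\in\{-1,0,1\}$. (For $h=0$ the vertical equations are vacuous, but the sine and radial equations still give that $\sum_r\delta_{rL}e^{irL(\theta-\theta_k)}$ is independent of $k$.) Since $\theta-\theta_k$, reduced modulo the period $2\pi/L$, runs through $N'$ equally spaced points, the aliasing identity turns each such statement into a family of vanishing twisted partial sums
\[
\sum_{r\equiv\rho\ (\mathrm{mod}\ N')}c_{rL+\varepsilon}\,e^{irL\theta}=0,\qquad \rho=1,\dots,N'-1,\ \ \varepsilon\in\{-1,0,1\}\qquad(\star)
\]
(with $c_{rL+\varepsilon}$ replaced by $\delta_{rL}$ when $h=0$). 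The same argument applied to the upper layer gives the companion relations with $(N',L,\theta)$ replaced by $(L',N,-\theta)$.

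Then I would exploit positivity through leading‑term domination. Taking $\rho=1$ in $(\star)$, the index set $\{r\equiv1\ (\mathrm{mod}\ N')\}=\{1+mN':m\in\mathbb{Z}\}$ has, as soon as $N'\ge3$, the unique smallest value of $|r|$ at $r=1$ (the next smallest being $N'-1\ge2$); since $c_n$ (resp.\ $|\delta_n|$) is strictly decreasing and geometrically small, the $r=1$ term dominates the others in modulus and $(\star)$ is violated. Hence $N'\le2$, and symmetrically $L'\le2$; with $\gcd(N',L')=1$ and $N'\le L'$ this forces $(N',L')\in\{(1,1),(1,2)\}$, i.e.\ $N=L$ — the assertion — or $L=2N$. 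To exclude $L=2N$ I would note that the lower‑layer equations then carry no information ($N\mid L$), while the upper‑layer relations specialise, via $\varphi_l=\pi l/N$, to $\sum_{r\ \mathrm{odd}}c_{rN+\varepsilon}e^{-irN\theta}=0$ for $\varepsilon\in\{-1,0,1\}$; pairing $r$ with $-r$, the case $\varepsilon=0$ forces $\cos(N\theta)$ to be as small as the ratio of a subleading to a leading Fourier coefficient, so $N\theta\approx\pi/2$, and then the case $\varepsilon=1$ reads $c_{N-1}-c_{N+1}=O(c_{3N})$, impossible since $c_{N-1}-c_{N+1}$ is of strictly larger order than $c_{3N}$ (the $\delta$‑analogue handles $h=0$). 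The remaining small cases ($N$ or $L$ small) and the degenerate case $a=1,\,h=0$ (where $w$ is singular and the two polygons would lie on a common circle) I would treat by direct inspection.

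The step I expect to be the main obstacle is the uniformity of the domination estimate: one must show that the twisted tail $\sum_{r\ne1}c_{rL+\varepsilon}e^{irL\theta}$ (or its $\delta$‑analogue) is strictly smaller in modulus than $c_{L+\varepsilon}$ for \emph{every} admissible twist $\theta$, and the safety margin in this inequality degrades as $(1-a)^{2}+h^{2}\to0$, where the kernel develops a non‑integrable singularity and the ratios $c_{n+1}/c_n$ tend to $1$. Obtaining effective bounds on these ratios in the near‑degenerate regime, together with the bookkeeping needed to finish the case $L=2N$, is where most of the technical effort will go.
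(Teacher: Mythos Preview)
Your Fourier framework is different from the paper's route, and the obstacle you flag at the end is not a technicality but the genuine gap. In the regime $(1-a)^{2}+h^{2}\to0$ the coefficients $c_n$ of $w$ do \emph{not} decay fast enough to give the domination you need: writing $b=1+a^{2}+h^{2}$, the kernel develops a near-singularity at $x=0$ of width $\sim\sqrt{(b-2a)/a}$ and height $\sim(b-2a)^{-3/2}$, so all $c_n$ with $|n|\lesssim(b-2a)^{-1/2}$ are of comparable size. For $a,h$ close enough to $1,0$ one then has $c_{(N'-1)L},c_{(N'+1)L},\ldots$ of the same order as $c_L$, and the leading-term inequality $|c_L|>\sum_{r\ne1}|c_{rL}|$ is simply false. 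The same failure contaminates your elimination of $L=2N$: the estimate $c_{N-1}-c_{N+1}\gg c_{3N}$ breaks down for exactly the same reason. Since $a$ and $h$ are unknowns of the problem, you cannot discard this regime, and ``effective bounds on the ratios $c_{n+1}/c_n$'' cannot help because those ratios really do tend to $1$.

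The paper sidesteps this completely. It uses only the sine equations together with the qualitative Lemma~\ref{lemmayong}: applied (with $N$ replaced by $L$) to equation~(\ref{equation3}) it gives $\theta-\tfrac{2\pi k}{N}\in\tfrac{\pi}{L}\mathbb{Z}$ for every $k$, hence $N\mid 2L$, and symmetrically $L\mid 2N$, so $L\in\{N,2N\}$. The crucial point is that Lemma~\ref{lemmayong} is obtained from Lemma~\ref{lemmaerror} by letting the \emph{exponent} $\alpha\to\infty$ and invoking Dini's theorem on the resulting monotone family---an argument that is uniform in $a,h$ and never appeals to decay rates of the $c_n$ at $\alpha=\tfrac32$. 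The case $L=2N$ is then excluded by two sign-definite power-series identities (Lemmas~\ref{lemma3} and~\ref{lemma4}), each valid for \emph{all} admissible $a$ (and $h$), again with no asymptotic comparison of Fourier coefficients. Your roots-of-unity filter is a clean way to see why one lands on $L\in\{N,2N\}$, but to close the argument you would need an input of this uniform, sign-definite type in place of the domination estimate.
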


Then the necessary conditions and sufficient conditions for the central configurations formed by two twisted regular polygons are the conditions given in Corollary \ref{corollary1} and \ref{corollary2}. Furthermore,  some particular cases are already completely known, specially, the case of $\theta=0$ in Theorem \ref{Moeckel1} and \ref{Moeckel2}.
%构成中心构型的必要条件是$N=L$, 即由正多边形套构成中心构型, 两个正多边形必须有相同边数(或顶点数).
\section{Some Lemmas}
\begin{lemma}
Let $a_j > 0, 1\leqslant j \leqslant k, A_1 \geqslant A_2 \geqslant,\ldots,\geqslant A_k\geqslant 0$, then
\begin{equation}
\lim_{n\rightarrow\infty}({\sum_{1\leqslant j\leqslant k} {a_j A_j^n } })^{\frac{1}{n}} = A_1.
\end{equation}
\end{lemma}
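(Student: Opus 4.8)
The plan is to isolate the dominant term $A_1^n$ from the sum and then control the remaining factor by a squeeze argument, relying on the elementary fact that $C^{1/n}\to 1$ for any fixed constant $C>0$.

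First I would dispose of the degenerate case $A_1=0$. Since $0\le A_k\le\cdots\le A_2\le A_1=0$, all the $A_j$ then vanish, so $\sum_{1\le j\le k}a_jA_j^n=0$ for every $n$, its $n$-th root is $0=A_1$, and the conclusion holds trivially. Hence from now on I assume $A_1>0$, which is what makes the factoring step below legitimate.

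Second, I would factor out the largest term and write
\begin{equation}
\sum_{1\le j\le k} a_j A_j^n = A_1^n\sum_{1\le j\le k} a_j\Bigl(\frac{A_j}{A_1}\Bigr)^n =: A_1^n\,S_n.
\end{equation}
Because $0\le A_j/A_1\le 1$ for each $j$, every summand satisfies $0\le a_j(A_j/A_1)^n\le a_j$, while the $j=1$ summand equals $a_1$ for all $n$. Consequently the inner sum $S_n$ is trapped between two fixed positive constants, namely $a_1\le S_n\le\sum_{1\le j\le k}a_j$, uniformly in $n$. Taking $n$-th roots gives
\begin{equation}
\Bigl(\sum_{1\le j\le k} a_j A_j^n\Bigr)^{1/n}=A_1\,S_n^{1/n},\qquad a_1^{1/n}\le S_n^{1/n}\le\Bigl(\textstyle\sum_{1\le j\le k}a_j\Bigr)^{1/n}.
\end{equation}
Since $C^{1/n}=e^{(\ln C)/n}\to e^0=1$ for any positive constant $C$, both bounds $a_1^{1/n}$ and $(\sum_j a_j)^{1/n}$ tend to $1$, so the squeeze theorem yields $S_n^{1/n}\to 1$ and therefore the whole expression tends to $A_1\cdot 1=A_1$.

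There is no genuine obstacle in this lemma; the argument is entirely elementary. The only two points requiring a moment of care are the separation of the case $A_1=0$ (where dividing by $A_1$ is impossible) from the case $A_1>0$, and the observation that the strict positivity $a_1>0$ is precisely what keeps $S_n$ bounded away from zero, so that $S_n^{1/n}$ cannot drift below $1$ in the limit. The substantive content reduces to the standard limit $C^{1/n}\to 1$, used on both ends of the squeeze.
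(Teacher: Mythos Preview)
Your proof is correct and entirely standard: factor out the dominant $A_1^n$, trap the residual sum between the fixed constants $a_1$ and $\sum_j a_j$, and apply $C^{1/n}\to 1$ via the squeeze theorem. The paper itself does not supply a proof of this lemma at all---it is stated without argument, presumably as a well-known elementary fact---so there is nothing to compare against, and your write-up fills that gap cleanly.
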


\begin{lemma}
Given $a_j>0,1\leq j\leq k,A_1>\cdots> A_k>0$. For the function $f(x)=(\sum_{1\leq j \leq k}a_jA^x_j)^{\frac{1}{x}}, x \in (0,\infty)$, we have  \begin{equation}\label{}
f'(x)=(-A_1\ln a_1)\frac{1}{x^2}+o(\frac{1}{x^2}),
\end{equation}
when $x \rightarrow \infty$.
\end{lemma}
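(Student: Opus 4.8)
The plan is to put $f'$ into closed form by logarithmic differentiation and then substitute the elementary large-$x$ asymptotics of $g(x):=\sum_{1\le j\le k}a_jA_j^{x}$ and of $g'(x)$; the whole assertion then reduces to showing $x^{2}f'(x)\to -A_1\ln a_1$ as $x\to\infty$.

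First I would write $f(x)=g(x)^{1/x}$, so that $\ln f(x)=x^{-1}\ln g(x)$ and hence
\[
x^{2}\,\frac{f'(x)}{f(x)}=x\,\frac{g'(x)}{g(x)}-\ln g(x)=:N(x),\qquad\text{i.e.}\qquad x^{2}f'(x)=f(x)\,N(x).
\]
(Here $f,g>0$ are smooth on $(0,\infty)$, so this is legitimate.) Thus it suffices to prove $f(x)\to A_1$ and $N(x)\to-\ln a_1$. Factoring out the dominant term, $g(x)=a_1A_1^{x}(1+r(x))$ and $g'(x)=a_1A_1^{x}(\ln A_1+s(x))$ with $r(x)=\sum_{2\le j\le k}(a_j/a_1)(A_j/A_1)^{x}$ and $s(x)=\sum_{2\le j\le k}(a_j/a_1)(\ln A_j)(A_j/A_1)^{x}$. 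Since each $A_j/A_1<1$, both $r(x)$ and $s(x)$ are $O((A_2/A_1)^{x})$, hence exponentially small.

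From this, $\ln g(x)=x\ln A_1+\ln a_1+\ln(1+r(x))=x\ln A_1+\ln a_1+o(1)$ and $g'(x)/g(x)=(\ln A_1+s(x))/(1+r(x))=\ln A_1+o(1)$, the $o(1)$'s being in fact $O((A_2/A_1)^{x})$; and $f(x)=\exp(x^{-1}\ln g(x))=\exp(\ln A_1+x^{-1}\ln a_1+o(x^{-1}))\to A_1$ (this is also essentially the preceding lemma). Plugging into $N(x)$, the two copies of $x\ln A_1$ cancel, the leftover $x\cdot O((A_2/A_1)^{x})$ still tends to $0$, and $N(x)\to-\ln a_1$. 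Therefore $x^{2}f'(x)=f(x)N(x)\to A_1(-\ln a_1)$, i.e.
\[
f'(x)=\frac{-A_1\ln a_1}{x^{2}}+o\!\Bigl(\frac1{x^{2}}\Bigr),
\]
which is the claim; the degenerate cases $k=1$ (empty $r,s$) and $a_1=1$ (vanishing leading coefficient) are included.

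The step needing the most care is the cancellation inside $N(x)$: one has to be sure that the factor $x$ produced there does not inflate the exponentially small errors in $\ln g$ and in $g'/g$ into something that survives in the limit — it does not, because $x(A_2/A_1)^{x}\to0$. Everything else is bookkeeping.
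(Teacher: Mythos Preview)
Your proof is correct and follows essentially the same route as the paper: both compute $f'(x)$ via logarithmic differentiation to get $x^{2}f'(x)=f(x)\bigl[x\,g'(x)/g(x)-\ln g(x)\bigr]$, factor out the dominant term $a_1A_1^{x}$ (the paper's $b_j,B_j$ are your $a_j/a_1,\,A_j/A_1$), and use $x(A_j/A_1)^{x}\to 0$ to see that the bracket tends to $-\ln a_1$. Your write-up is in fact a bit more explicit than the paper's about why the factor of $x$ does not destroy the cancellation.
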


\begin{proof}
$f'(x)=(\sum_{1\leq j \leq k}a_jA^x_j)^{\frac{1}{x}}[\frac{x\sum_{1\leq j \leq k}a_jA^x_j\ln A_j}{\sum_{1\leq j \leq k}a_jA^x_j}-\ln{\sum_{1\leq j \leq k}a_jA^x_j}]/x^2$

$= \frac{1}{x^2}(\sum_{1\leq j \leq k}a_jA^x_j)^{\frac{1}{x}}[\frac{\ln A_1 + \sum_{2\leq j \leq k}b_jB^x_j\ln A_j}{1+\sum_{2\leq j \leq k}b_jB^x_j}x- \ln(a_1A^x_1)-\ln{1+ \sum_{2\leq j \leq k}b_jB^x_j}]$.

where $b_j=\frac{a_j}{a_1}, B_j=\frac{A_j}{A_1} (2 \leq j \leq k)$.

Then $B^x_j \rightarrow 0, xB^x_j \rightarrow 0$, when $x \rightarrow \infty$.

So $f'(x)=\frac{1}{x^2}(\sum_{1\leq j \leq k}a_jA^x_j)^{\frac{1}{x}}[-\ln a_1 + o(1)]=(-A_1\ln a_1)\frac{1}{x^2}+o(\frac{1}{x^2})$.
\end{proof}

\begin{lemma}\label{lemmaerror}
Let \begin{equation}g(x,\alpha)=\sum_{1\leq j\leq N} \frac{sin(\theta_j+\theta)}{(1+a^2-2acos(\theta_j+\theta)+x)^{\alpha}},\nonumber\end{equation}
where $\theta\in\left(0,\frac{\pi}{N}\right)$, $a>0, \alpha>0,x\geq0$. Then $g(x,\alpha)>0$ in $\{x:x\geq0\}$ provided $\alpha$ is sufficiently large.
\end{lemma}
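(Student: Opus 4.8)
The cleanest route actually proves more than is asked: positivity of $g(x,\alpha)$ for \emph{every} $x\ge 0$ and \emph{every} $\alpha>0$. Write $c_j:=1+a^2-2a\cos(\theta_j+\theta)$, which is strictly positive since $\theta_j+\theta\not\equiv 0\pmod{2\pi}$ for $\theta\in(0,\pi/N)$, so that $g(x,\alpha)=\sum_{j=1}^N\sin(\theta_j+\theta)(c_j+x)^{-\alpha}$. Using $(c_j+x)^{-\alpha}=\Gamma(\alpha)^{-1}\int_0^\infty t^{\alpha-1}e^{-(c_j+x)t}\,dt$ for $\alpha>0$ and interchanging sum and integral gives
\[
g(x,\alpha)=\frac{1}{\Gamma(\alpha)}\int_0^\infty t^{\alpha-1}e^{-xt}\,\psi(t)\,dt,\qquad \psi(t):=\sum_{j=1}^N\sin(\theta_j+\theta)\,e^{-c_jt}.
\]
Since $t^{\alpha-1}e^{-xt}>0$, it suffices to show $\psi(t)>0$ for all $t>0$; then $g(x,\alpha)>0$ follows for all $x\ge 0$ and all $\alpha>0$, in particular for $\alpha$ large.

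Next I would put $\psi$ into a one‑signed form. With $z=2at$ one has $\psi(t)=e^{-(1+a^2)t}\sum_{j=1}^N\sin(\theta_j+\theta)\,e^{z\cos(\theta_j+\theta)}$. Expanding $e^{z\cos\phi}=\sum_{k\in\mathbb Z}I_k(z)e^{ik\phi}$ (modified Bessel functions, $I_k(z)>0$ for $z>0$), differentiating this expansion in $\phi$ to get $\sin\phi\,e^{z\cos\phi}=\tfrac1{iz}\sum_k kI_k(z)e^{ik\phi}$, and using $\sum_{j=1}^N e^{ik(\theta_j+\theta)}=Ne^{ik\theta}$ when $N\mid k$ and $0$ otherwise, one obtains
\[
\psi(t)=\frac{N^2e^{-(1+a^2)t}}{at}\sum_{m\ge 1}m\,I_{mN}(2at)\,\sin(mN\theta).
\]
By a roots‑of‑unity filter the same sum equals $-\tfrac1{2N}H'(\theta)$, where $H(\vartheta):=\tfrac1N\sum_{l=0}^{N-1}e^{z\cos(\vartheta+2\pi l/N)}$ is $2\pi/N$‑periodic with Fourier expansion $H(\vartheta)=I_0(z)+2\sum_{m\ge1}I_{mN}(z)\cos(mN\vartheta)$, all of whose coefficients are positive. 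Thus $\psi(t)>0$ is \emph{equivalent} to: $H$ is strictly decreasing on $(0,\pi/N)$.

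The monotonicity of $H$ is the heart of the matter and the step I expect to be hardest. Because $H$ is even about $\vartheta=0$ and about $\vartheta=\pi/N$ one has $H'(0)=H'(\pi/N)=0$, and the Fourier series gives $H'(\vartheta)=-2N^2I_N(z)\,\vartheta+O(\vartheta^2)$, so $H'<0$ just to the right of $0$; here the hypothesis $\theta\in(0,\pi/N)$ enters through $\sin(N\theta)>0$ (and the excluded endpoints $\theta\in\{0,\pi/N\}$ are exactly the central‑configuration cases of Corollaries \ref{corollary1}--\ref{corollary3}). It remains to rule out a zero of $H'$ in the open interval. This is precisely the assertion that the $2\pi/N$‑periodization of the single‑peaked kernel $e^{z\cos\phi}$ is unimodal on its fundamental period; I would prove it by a variation‑diminishing (Sturmian) argument — the number of sign changes of $H'$ over one period is at most that of $(e^{z\cos\phi})'$ over $(-\pi,\pi]$, namely $2$, and both are already used up at $0$ and $\pi/N$ — or, more elementarily, by deriving a positivity criterion for $\sum_{m\ge1}mI_{mN}(z)\sin(mu)$ on $u\in(0,\pi)$ from Turán‑type (log‑concavity) inequalities for the $I_{mN}(z)$. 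The degenerate case $a=1$, $x=0$ (where the integrand's Fourier coefficients diverge so the representation above is unavailable) must be treated directly from the finite sum, but there $1+a^2-2a\cos(\theta_j+\theta)=4\sin^2\!\big(\tfrac{\theta_j+\theta}{2}\big)$ and the same dominance argument applies.

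Finally, if one only wants the literal statement and wishes to invoke the two preceding Lemmas, there is a shorter argument valid for $x$ in any bounded range: split $g(x,\alpha)=\Sigma_+(\alpha)-\Sigma_-(\alpha)$ according to the sign of $\sin(\theta_j+\theta)$. For $\theta\in(0,\pi/N)$ the angle $\theta_N+\theta\equiv\theta$ is the unique one closest to $0\bmod 2\pi$ and has positive sine, so the smallest denominator $1+a^2-2a\cos\theta+x$ occurs in $\Sigma_+$, whereas the smallest denominator in $\Sigma_-$ is $1+a^2-2a\cos(2\pi/N-\theta)+x$, which is strictly larger; by the first Lemma of this section (with $A_j=(c_j+x)^{-1}$, $a_j=|\sin(\theta_j+\theta)|$) the quantities $\Sigma_\pm(\alpha)^{1/\alpha}$ converge to the reciprocals of these two numbers, the $\Sigma_+$‑limit being strictly bigger, so $\Sigma_+(\alpha)>\Sigma_-(\alpha)$ for $\alpha$ large. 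To make the threshold uniform in $x\ge 0$ one adds: as $x\to\infty$,
\[
g(x,\alpha)=\binom{\alpha+N-2}{N-1}N a^{N-1}\sin(N\theta)\,x^{-\alpha-N+1}\bigl(1+o(1)\bigr),
\]
obtained by expanding each $(c_j+x)^{-\alpha}$ in powers of $1/x$ and using $\sum_{j}c_j^{\,r}\sin(\theta_j+\theta)=0$ for $0\le r\le N-2$; and one uses the rate estimate of the second Lemma of this section for $\Sigma_\pm(\alpha)^{1/\alpha}$ to bridge the intermediate regime $x\asymp\alpha$ — exactly the regime where this second approach is delicate and where the global positivity of $\psi$ from the first approach is really doing the work.
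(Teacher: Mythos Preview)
Your first route—the Laplace representation $g(x,\alpha)=\Gamma(\alpha)^{-1}\int_0^\infty t^{\alpha-1}e^{-xt}\psi(t)\,dt$ followed by the Bessel expansion of $\psi$—is a genuinely different and more ambitious line than the paper's, and if completed would give positivity for \emph{every} $\alpha>0$. But the step you yourself flag as hardest, the unimodality of $H(\vartheta)=\tfrac1N\sum_l e^{z\cos(\vartheta+2\pi l/N)}$ on $[0,\pi/N]$ (equivalently $\sum_{m\ge1}mI_{mN}(z)\sin(mu)>0$ on $(0,\pi)$), is not proved, and neither sketched argument closes it. Periodization is convolution with a Dirac comb, not a P\'olya frequency kernel, so there is no variation-diminishing bound on the sign changes of $H'$ inherited from those of $(e^{z\cos\phi})'$. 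And Tur\'an-type log-concavity of $n\mapsto I_n(z)$ does not feed any standard positivity criterion for that sine series; for large $z$ the coefficients $m\mapsto mI_{mN}(z)$ are initially \emph{increasing}, so Fej\'er-type monotonicity tests fail outright. (Your aside about a degenerate case $a=1$, $x=0$ is also unnecessary: for $\theta\in(0,\pi/N)$ every $c_j$ is strictly positive, so the integral representation is valid there too.)

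Your second route—splitting $g=\Sigma_+-\Sigma_-$ by the sign of $\sin(\theta_j+\theta)$ and comparing dominant terms via Lemma~2.1—is exactly the paper's idea, but you are missing the device that handles uniformity in $x$. The paper substitutes $t=\dfrac{2a}{1+a^2+x}$, which sends $x\in[0,\infty)$ onto the bounded interval $t\in\bigl(0,\tfrac{2a}{1+a^2}\bigr]$ and turns the sum into $f(t,\alpha)=\sum_j\sin(\theta_j+\theta)\,(1-t\cos(\theta_j+\theta))^{-\alpha}$. Your large-$x$ asymptotic then becomes precisely the power-series expansion of $f$ at $t=0$, giving $f>0$ on some $(0,\delta]$. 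On the \emph{compact} remainder $[\delta,\tfrac{2a}{1+a^2}]$ the paper uses Lemma~2.2 to show that $h(t,\alpha)=\Sigma_+^{1/\alpha}-\Sigma_-^{1/\alpha}$ is eventually increasing in $\alpha$, and then invokes Dini's theorem to upgrade the pointwise limit $h(t,\alpha)\to (1-t\cos\theta)^{-1}-(1-t\cos(\tfrac{2\pi}{N}-\theta))^{-1}>0$ to a uniform one. After this compactification there is no ``intermediate regime $x\asymp\alpha$'' left to bridge.
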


\begin{proof}
Set $t=\frac{2a}{1+a^2+x}$, then $t \in (0,\frac{2a}{1+a^2}]$ and we need only to prove that
\begin{equation}f(t,\alpha)=\sum_{1\leq j\leq N} \frac{sin(\theta_j+\theta)}{(1-tcos(\theta_j+\theta))^{\alpha}}\nonumber\end{equation} is positive in
$(0,\frac{2a}{1+a^2}]$ for sufficiently large $\alpha$.

Firstly, we have
\begin{eqnarray}
f(t,\alpha)& = &\sum_{1\leq j\leq N}sin(\theta_j+\theta)\sum_{m\geq 0}[\frac{\alpha(\alpha+1)\cdots(\alpha+m-1)}{m!}t^m\cos^m(\theta_j+\theta)]\nonumber\\\
& = &\sum_{m\geq 0}[\frac{\alpha(\alpha+1)\cdots(\alpha+m-1)}{m!}(\frac{t}{2})^m]\sum_{1\leq j\leq N}[sin(m+1)(\theta_j+\theta)+(m-1)sin(m-1)(\theta_j+\theta)+\cdots]\nonumber\\\
& = & \frac{\alpha(\alpha+1)\cdots(\alpha+N-2)}{(N-1)!}(\frac{t}{2})^{N-1}N sin{(N \theta)}+o(t^{N-1})\nonumber\
\end{eqnarray}
So there exists $\delta_\alpha$ for any given $\alpha$ such that $f(t,\alpha)$ is positive for $t\in (0,\delta_\alpha]$.

Let \begin{equation}h(t,\alpha)=\{\sum_{0\leq j\leq [\frac{N-1}{2}]} \frac{sin(\theta_j+\theta)}{(1-tcos(\theta_j+\theta))^{\alpha}}\}^{\frac{1}{\alpha}}-\{\sum_{1\leq j\leq [\frac{N}{2}]} \frac{sin(\theta_j-\theta)}{(1-tcos(\theta_j-\theta))^{\alpha}}\}^{\frac{1}{\alpha}},\nonumber\end{equation}

 then $h(t,\alpha)$ and $f(t,\alpha)$ have the same sign.

We have $\frac{\partial h(t,\alpha)}{\partial \alpha}= [-\frac{\ln (\sin \theta)}{1-t\cos\theta}+\frac{\ln (\sin (\frac{2\pi}{N}-\theta))}{1-t\cos(\frac{2\pi}{N}-\theta)}]\frac{1}{\alpha^2}+o(\frac{1}{\alpha^2})$,
since $\frac{\ln (\sin (\frac{2\pi}{N}-\theta))}{1-t\cos(\frac{2\pi}{N}-\theta)}-\frac{\ln (\sin \theta)}{1-t\cos\theta}=\frac{(1-t\cos\theta)[\ln (\sin (\frac{2\pi}{N}-\theta))-\ln (\sin \theta)]+t[\cos(\frac{2\pi}{N}-\theta)-\cos\theta]}{[1-t\cos(\frac{2\pi}{N}-\theta)](1-t\cos\theta)}>0$,
hence $h(t,\alpha)$ is increasing about $\alpha$  provided $\alpha$ is sufficiently large.

So there exists some positive number $\delta$ (independent of $\alpha$ ) such that $h(t,\alpha)$ is positive in
$(0,\delta]$ for sufficiently large $\alpha$.

Secondly, $h(t,\alpha) \rightarrow h(t)$ for any $t \in [\delta,\frac{2a}{1+a^2}]$, where $h(t)=\frac{1}{1-t\cos\theta}-\frac{1}{1-t\cos(\frac{2\pi}{N}-\theta)}$, and there exists some positive number $\epsilon$ such that $h(t)\geq \epsilon >0$ for any $t \in [\delta,\frac{2a}{1+a^2}]$.

Since $h(t,\alpha)$ is increasing about $\alpha$  provided $\alpha$ is sufficiently large, by the well known theorem (Dini), we know that
$h(t,\alpha) \rightrightarrows h(t)$ on the compact interval $[\delta,\frac{2a}{1+a^2}]$, thus $h(t,\alpha)$ is positive in
$[\delta,\frac{2a}{1+a^2}]$ for sufficiently large $\alpha$.

As a result
$g(x,\alpha)$ is positive in $\{x:x\geq0\}$ provided $\alpha$ is sufficiently large.
\end{proof}

{\bf Remark.} We say some words about the proof of the \textbf{Lemma \ref{lemmaerror}} here. We found there was a gap in our original proof of the important \textbf{Lemma 2.10} in \cite{Yu20122106}. Since we din't notice that the problem of the uniform convergence of $h(t,\alpha)$ and the compactness of the whole interval we considered. So here we give the new proof of the \textbf{Lemma \ref{lemmaerror}} to correct the error in  \textbf{Lemma 2.10} of \cite{Yu20122106}.

Then we have the following important proposition which is a corollary of \textbf{Lemma \ref{lemmaerror}}. The detailed proof can be found in \cite{Yu20122106}.

\begin{lemma}\label{lemmayong}
If\begin{equation}\label{}
\sum_{j = 1}^{N } { \frac{ \sin{(\theta_j + \theta)}}{[1 + a^2 - 2a\cos{(\theta_j + \theta)} + h^2   ]^{\frac{3}{2}}     }} = 0,
\end{equation}
then $\theta  = \frac{j \pi }{N} (mod 2\pi)$ for some $1\leqslant j \leqslant 2N.$
\end{lemma}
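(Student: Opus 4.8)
\emph{Proof proposal.} The plan is to derive this as a direct consequence of Lemma \ref{lemmaerror}. The one real obstacle is that Lemma \ref{lemmaerror} only asserts positivity of $g(x,\alpha)$ for $\alpha$ \emph{sufficiently large}, whereas the sum appearing here has the fixed exponent $\alpha=3/2$ (and $N$ may be arbitrarily large, so $3/2$ is in general not ``large enough''). I would bridge this gap by a subordination (Beta–integral) identity. Write $A_j=1+a^2-2a\cos(\theta_j+\theta)+h^2$, and note that for $\theta\in(0,\pi/N)$ none of the angles $\theta_j+\theta$ is a multiple of $2\pi$, so $A_j=(1-a)^2+2a\big(1-\cos(\theta_j+\theta)\big)+h^2>0$ for all $j$. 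Fix any $\alpha>3/2$ and use $\int_0^\infty u^{p-1}(A+u)^{-(p+q)}\,du=A^{-q}B(p,q)$ with $p=\alpha-\tfrac32$, $q=\tfrac32$ to obtain
\[
\frac{1}{A_j^{3/2}}=\frac{1}{B(\alpha-\tfrac32,\tfrac32)}\int_0^\infty \frac{u^{\alpha-5/2}}{(A_j+u)^{\alpha}}\,du .
\]

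Multiplying by $\sin(\theta_j+\theta)$ and summing over $j$, and using $A_j+u=1+a^2-2a\cos(\theta_j+\theta)+(h^2+u)$, the inner sum becomes exactly $g(h^2+u,\alpha)$ in the notation of Lemma \ref{lemmaerror}. Now choose $\alpha$ large enough (depending only on $N$) that Lemma \ref{lemmaerror} applies: then $g(x,\alpha)>0$ for every $x\ge 0$, hence $g(h^2+u,\alpha)>0$ for every $u\ge 0$. Since $u^{\alpha-5/2}>0$ for $u>0$, the integrand is strictly positive for a.e.\ $u$, so the integral, and therefore $\sum_{j=1}^{N}\sin(\theta_j+\theta)\,A_j^{-3/2}$, is strictly positive. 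Thus the hypothesis of the lemma is incompatible with $\theta\in(0,\pi/N)$.

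Finally I would remove the restriction $\theta\in(0,\pi/N)$ by symmetry, which is routine. The function $f(\theta)=\sum_{j=1}^{N}\sin(\theta_j+\theta)/[1+a^2-2a\cos(\theta_j+\theta)+h^2]^{3/2}$ is $2\pi/N$–periodic in $\theta$ (shifting $\theta$ by $2\pi/N$ merely permutes the summation index) and odd under $\theta\mapsto-\theta$ (reindexing $j\mapsto N-j$ sends $\theta_j\mapsto-\theta_j$, hence $\sin(\theta_j+\theta)\mapsto-\sin(\theta_j+\theta)$ while $\cos(\theta_j+\theta)$ is unchanged). Consequently $f$ vanishes at each point $j\pi/N$, and by the strict positivity established above together with these two symmetries, $f$ has no zero in any open interval $(j\pi/N,(j+1)\pi/N)$. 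Therefore $f(\theta)=0$ with $0\le\theta\le2\pi$ forces $\theta=j\pi/N$ for some $1\le j\le 2N$, which is the assertion. The main difficulty, as indicated, is the passage from the large-$\alpha$ positivity of Lemma \ref{lemmaerror} to the exponent $3/2$; everything else is bookkeeping.
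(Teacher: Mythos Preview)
Your argument is correct. The paper itself does not spell out a proof here; it only states that Lemma~\ref{lemmayong} is a corollary of Lemma~\ref{lemmaerror} and refers to \cite{Yu20122106} for the details. You supply such a derivation explicitly, and the key step---passing from the positivity of $g(x,\alpha)$ for \emph{large} $\alpha$ to positivity at the fixed exponent $\alpha=3/2$---is handled cleanly via the Euler--Beta representation
\[
A^{-3/2}=\frac{1}{B(\alpha-\tfrac32,\tfrac32)}\int_0^\infty u^{\alpha-5/2}(A+u)^{-\alpha}\,du\qquad(\alpha>\tfrac32).
\]
This is a genuinely different device from the one most naturally suggested by the structure of $g$: since $\partial_x g(x,\alpha)=-\alpha\,g(x,\alpha+1)$ and $g(x,\alpha)\to 0$ as $x\to\infty$, one can alternatively descend from some large $\alpha_0\in\tfrac32+\mathbb{N}$ down to $\tfrac32$ step by step, each step showing that $g(\cdot,\alpha)$ is strictly decreasing to $0$ and hence positive on $[0,\infty)$. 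Your integral identity achieves the same reduction in a single stroke and avoids the inductive bookkeeping; the descent argument, on the other hand, uses nothing beyond differentiation and the vanishing at infinity. Either route is perfectly adequate.

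One minor inaccuracy: the threshold on $\alpha$ furnished by Lemma~\ref{lemmaerror} may depend on $a$ and $\theta$, not only on $N$ as you wrote; this is harmless, since $a$ and $\theta\in(0,\pi/N)$ are fixed throughout the positivity argument. The final symmetry reduction (oddness in $\theta$ together with $2\pi/N$-periodicity forcing all zeros to lie at the points $j\pi/N$) is routine and correct.
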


\section{The proof of main results}

By(\ref{equation3}) and \textbf{lemma \ref{lemmayong}}, for$\forall k\in \{1,2,\ldots,L\}$there exist corresponding $l_k$ such that $1\leqslant l_k \leqslant 2L$ and
\begin{equation}\label{equation5}
\theta - \frac{2\pi k}{N} = \frac{l_k \pi}{L } (mod 2\pi).
\end{equation}
Then there exists some $ j \in \{1,2,\ldots, 2L  \}$ such that $\frac{2\pi}{N} = \frac{j\pi }{L} (mod 2\pi)$, moreover we have $N|(2L)$. Similarly, by(\ref{equation4}), for$\forall \nu\in \{1,2,\ldots,N\}$ there exist corresponding $n_\nu$ such that $1\leqslant n_\nu \leqslant 2N$ and
\begin{equation}\label{equation6}
-\theta - \frac{2\pi \nu}{L} = \frac{n_\nu \pi}{N } (mod 2\pi).
\end{equation}
Then  there exists some $ j \in \{1,2,\ldots, 2N  \}$ such that $\frac{2\pi}{L} = \frac{j\pi }{N} (mod 2\pi)$, thus
 we also have $L|(2N)$. Let $N= 2^n\cdot N_1, L = 2^l \cdot L_1$, where $N_1, L_1$ are  odd, then we have $N_1  = L_1,\ $and $n\leqslant l \leqslant n+1 $, so $L=N$ or $L = 2N$.

 In the following, we will prove $L \neq 2N$. Otherwise if $L = 2N$, we have $\theta = \frac{(4k+l_k)\pi}{2N} (mod 2\pi)(1\leq k\leq 2N, 1\leqslant l_k \leqslant 4N)$ by (\ref{equation5}) and $\theta =- \frac{(\nu+n_\nu)\pi}{N} (mod 2\pi)(1\leq \nu\leq N, 1 \leqslant n_\nu \leqslant 2N)$ by (\ref{equation6}), hence $l_k = 2l'_k$$(1\leqslant l'_k \leqslant 2N)$ for any $k\in \{1,2,\ldots,2N\}$ and $2k+l'_k+\nu+n_\nu = 0 (mod 2N)$ for any $k\in \{1,2,\ldots,2N\},\nu \in \{1,2,\ldots,N\}$, furthermore, there are both even and odd number in the numbers $n_\nu (\nu \in \{1,2,\ldots,N\})$.

Then it's easy to know that
$P_1,P_2\ldots, P_N; \tilde{P}_1,\tilde{P}_2\ldots, \tilde{P}_L$ form a central configuration if and only if $a,b,h$ satisfy the following equations (in fact, by symmetry, one can get the same result):

\begin{equation}\label{}
A + b\sum_{j =1}^{2N}{ \frac{1- a\cos{(\frac{j\pi }{N} ) } }{[1+ a^2 -2 a \cos{( \frac{j \pi}{N }) +h^2  ]^{\frac{3}{2}}  } }  }  = \mu ,
\end{equation}
\begin{equation}\label{}
bB + \sum_{j = 1}^{N} {\frac{ 1- a^{-1}\cos{(\frac{2j - 1}{N} \pi)}  }{[1 + a^2 - 2a \cos{(\frac{2j -1 }{N }\pi) + h^2 }   ]^{\frac{3}{2}}} } = \mu,
\end{equation}
\begin{equation}\label{}
bB + \sum_{j = 1}^{N} {\frac{ 1- a^{-1}\cos{(\frac{2j}{N} \pi)}  }{[1 + a^2 - 2a \cos{(\frac{2j}{N }\pi) + h^2 }   ]^{\frac{3}{2}}} } = \mu,
\end{equation}

\begin{equation}\label{equation10}
h \sum_{j = 1}^{2N } { \frac{1}{[1 + a^2 - 2a\cos{(\frac{j\pi }{N} )} + h^2   ]^{\frac{3}{2}}     }} = \frac{\mu L h}{N + bL},
\end{equation}
\begin{equation}\label{equation11}
h \sum_{j = 1}^{N } { \frac{1}{[1 + a^2 - 2a\cos{(\frac{2j -1}{N}\pi)} + h^2   ]^{\frac{3}{2}}     }} = \frac{\mu N h}{N + bL},
\end{equation}
\begin{equation}\label{equation12}
h \sum_{j = 1}^{N } { \frac{1}{[1 + a^2 - 2a\cos{(\frac{2j}{N}\pi)} + h^2   ]^{\frac{3}{2}}     }} = \frac{\mu N h}{N + bL}.
\end{equation}

\begin{lemma}\label{lemma3}
Let
\begin{equation}
 f(x) = \sum_{j=1}^{N}{\frac{1}{[1+a^2 -2a\cos{(\frac{2j-1}{N}\pi)} +x]^{\frac{3}{2}}}} - \sum_{j=1}^{N}{\frac{1}{[1+a^2 -2a\cos{(\frac{2j}{N}\pi)} +x]^{\frac{3}{2}}}}\quad x\in[0,+\infty) ,
 \end{equation}
 then for any $a\in(0,\infty)$, we have $f(x)<0$ for any  $x\in[0,+\infty)$ except the unique singularity $a=1,x=0$.
\end{lemma}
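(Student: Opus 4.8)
The plan is to compare the two sums term-by-term after rewriting them as a single sum over a common index. Note that $\{\frac{2j-1}{N}\pi : 1\le j\le N\}$ are the $2N$-th roots of unity that are not $N$-th roots, while $\{\frac{2j}{N}\pi : 1\le j\le N\}$ are the $N$-th roots of unity; together they give all $2N$-th roots of unity. So I would introduce the substitution $t = \frac{2a}{1+a^2+x} \in (0, \frac{2a}{1+a^2}]$ exactly as in the proof of Lemma \ref{lemmaerror}, which reduces the claim (after pulling out the positive factor $(1+a^2+x)^{-3/2}$) to showing that
\begin{equation}\label{}
F(t) = \sum_{j=1}^{N}\frac{1}{(1-t\cos\frac{(2j-1)\pi}{N})^{3/2}} - \sum_{j=1}^{N}\frac{1}{(1-t\cos\frac{2j\pi}{N})^{3/2}} < 0
\end{equation}
for $t \in (0, \frac{2a}{1+a^2}]$, with equality forced only in the limiting case $t=1$ (i.e. $a=1,x=0$). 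Since $\frac{2a}{1+a^2}\le 1$ with equality iff $a=1$, the interval of interest is $(0,1]$, and the singularity $a=1,x=0$ is precisely the right endpoint $t=1$ where one denominator vanishes.

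The key step is to prove $F(t)<0$ on $(0,1)$. Here I would expand $(1-tc)^{-3/2} = \sum_{m\ge 0} \binom{m+1/2}{m} t^m c^m$ (all coefficients positive) and use the standard root-of-unity sum identities: $\sum_{j=1}^{N}\cos^m\frac{2j\pi}{N}$ and $\sum_{j=1}^{N}\cos^m\frac{(2j-1)\pi}{N}$ can both be evaluated via $\cos^m\phi = 2^{-m}\sum_k \binom{m}{k}\cos((m-2k)\phi)$ together with $\sum_{j} e^{i r \cdot 2j\pi/N} = N$ if $N\mid r$ and $0$ otherwise, and $\sum_j e^{i r(2j-1)\pi/N} = (-1)^{r}N$ if $N\mid r$ and $0$ otherwise. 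Thus in the difference $\sum_j \cos^m\frac{(2j-1)\pi}{N} - \sum_j\cos^m\frac{2j\pi}{N}$, the contributions from $r\equiv 0\pmod{2N}$ cancel, leaving only the terms with $r\equiv N\pmod{2N}$ (which requires $m\equiv N\pmod 2$), each carrying a factor $(-1)^N - 1$. When $N$ is odd this is $-2$, giving a strictly negative leading term at $m=N$ and negative terms thereafter, so $F(t)<0$ immediately. When $N$ is even every such difference vanishes, so the power-series argument only gives $F\le 0$ and a genuinely different idea is needed.

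The main obstacle is therefore the case $N$ even, where $F$ is not obviously signed coefficient-wise. For this I would instead pair the angles directly: when $N$ is even, $\cos\frac{2j\pi}{N}$ ranges over $\cos\frac{2\pi k}{N}$ and the odd-multiple angles are a "shift by $\pi/N$"; I would use the reflection/halving structure to write $F(t)$ as a sum over $j$ of an expression of the form $\phi(c_j) - \phi(c_j')$ where $\phi(u) = (1-tu)^{-3/2} + (1+tu)^{-3/2}$ is an even, strictly convex function of $u$ on $(-1/t,1/t)$, and the multiset $\{|{\cos\frac{(2j-1)\pi}{N}}|\}$ majorizes $\{|\cos\frac{2j\pi}{N}|\}$ (both averaging over the same $2N$-th roots, but the $N$-th roots include $u=\pm1$ as extreme points). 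By the majorization/Karamata inequality for the strictly convex $\phi$, the sum over the more-spread-out set is larger, which gives the strict inequality $F(t)<0$. Finally, continuity of $F$ at $t=1$ from the left (the divergence comes only from the single term with $\cos=1$ in the second sum, pushing $F\to -\infty$) confirms $F<0$ on all of $(0,1)$ and that the only place the strict inequality can fail is the excluded singular point, completing the proof.
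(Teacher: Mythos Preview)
Your reduction via $t=\frac{2a}{1+a^2+x}$ and the power-series expansion of $(1-tc)^{-3/2}$ is exactly the paper's approach, and it is the right one. The trouble is a slip in the root-of-unity identity: for $N\mid r$, say $r=kN$,
\[
\sum_{j=1}^{N}e^{ir(2j-1)\pi/N}=e^{-ir\pi/N}\sum_{j=1}^{N}e^{i\cdot 2rj\pi/N}=N\,e^{-ik\pi}=(-1)^{k}N,
\]
so the factor is $(-1)^{r/N}$, not $(-1)^{r}$. Consequently, for every $r\equiv N\pmod{2N}$ the difference $\sum_j\cos\!\big(r\tfrac{(2j-1)\pi}{N}\big)-\sum_j\cos\!\big(r\tfrac{2j\pi}{N}\big)$ equals $(-1)^{\text{odd}}N-N=-2N$, \emph{independently of the parity of $N$}. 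Thus every power-series coefficient of $F$ is $\le 0$ and the coefficient of $t^{N}$ is strictly negative (it picks up the terms $r=\pm N$), giving $F(t)<0$ on $(0,1)$ for all $N$. This is precisely the paper's argument, and there is no separate ``$N$ even'' case.

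Your proposed fallback via Karamata is both unnecessary and incorrect as stated. First, you assert that $\{|\cos\frac{(2j-1)\pi}{N}|\}$ majorizes $\{|\cos\frac{2j\pi}{N}|\}$, while in the same sentence you note that the $N$-th roots contain the extreme values $\pm 1$; those two claims point in opposite directions, and indeed it is the \emph{even}-angle set that is the more spread-out one. Second, majorization requires equal totals, but for instance when $N=4$ the half-multisets of absolute cosines are $\{1,0\}$ and $\{\tfrac{\sqrt2}{2},\tfrac{\sqrt2}{2}\}$, with sums $1\neq\sqrt2$, so Karamata does not apply. Once the sign error above is corrected, simply drop this paragraph: the power-series argument already delivers the strict inequality for every $N$, and your discussion of the endpoint $t\to 1$ (second sum $\to+\infty$) correctly isolates the excluded singular point $a=1,\ x=0$.
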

\begin{proof}
Set $t=\frac{2a}{1+a^2+x}$, then $t \in (0,1)$ and we need only to prove that
\begin{equation}g(t)=\sum_{1\leq j\leq N} \frac{1}{(1-tcos(\theta_j-\frac{\pi}{N}))^{ \frac{3}{2}}}-\sum_{1\leq j\leq N} \frac{1}{(1-tcos\theta_j)^{ \frac{3}{2}}}\nonumber\end{equation} is  negative in
$(0,1)$.

In fact, we have
\begin{eqnarray}
g(t)& = &\sum_{1\leq j\leq N}\sum_{m\geq 0}[\frac{(\frac{3}{2})(\frac{3}{2}+1)\cdots(\frac{3}{2}+m-1)}{m!}t^m(\cos^m(\theta_j-\frac{\pi}{N})-\cos^m\theta_j)]\nonumber\\\
& = &\sum_{m\geq 0}[\frac{(\frac{3}{2})(\frac{3}{2}+1)\cdots(\frac{3}{2}+m-1)}{m!}(\frac{t^m}{2^{m-1}})]\sum_{1\leq j\leq N}\{[\cos m(\theta_j-\frac{\pi}{N})+m \cos (m-2)(\theta_j-\frac{\pi}{N})+\cdots]\nonumber\\\
&-& [\cos m(\theta_j)+m \cos (m-2)(\theta_j)+\cdots]\}\nonumber\
\end{eqnarray}
Let $a_m=[\cos m(\theta_j-\frac{\pi}{N})+m \cos (m-2)(\theta_j-\frac{\pi}{N})+\cdots]
- [\cos m(\theta_j)+m \cos (m-2)(\theta_j)+\cdots]$, then it's easy to know that $a_m\leq 0$ and $g(t)$ is  negative in
$(0,1)$.
\end{proof}
By (\ref{equation11}),(\ref{equation12}) and \textbf{Lemma\ref{lemma3}}, we know $h=0$, and then there must be $a \neq1$, otherwise there will be  collision.
So $P_1,P_2\ldots, P_N; \tilde{P}_1,\tilde{P}_2\ldots, \tilde{P}_L$ form a central configuration if and only if $a,b,h$ satisfy the following equations
\begin{equation}\label{equation7}
A + b\sum_{j =1}^{2N}{ \frac{1- a\cos{(\frac{j\pi }{N} ) } }{[1+ a^2 -2 a \cos{( \frac{j \pi}{N })  ]^{\frac{3}{2}}  } }  }  = \mu ,
\end{equation}
\begin{equation}\label{equation8}
bB + \sum_{j = 1}^{N} {\frac{ 1- a^{-1}\cos{(\frac{2j - 1}{N} \pi)}  }{[1 + a^2 - 2a \cos{(\frac{2j -1 }{N }\pi)  }   ]^{\frac{3}{2}}} } = \mu,
\end{equation}
\begin{equation}\label{equation9}
bB + \sum_{j = 1}^{N} {\frac{ 1- a^{-1}\cos{(\frac{2j}{N} \pi)}  }{[1 + a^2 - 2a \cos{(\frac{2j}{N }\pi)  }   ]^{\frac{3}{2}}} } = \mu,
\end{equation}
\begin{lemma}\label{lemma4}
Let
\begin{equation}
f(x) = \sum_{j = 1}^{N} \frac{1- x^{-1}\cos{(\frac{2j -1}{N } \pi) } }{[1 + x^2 - 2x\cos{(\frac{2j -1}{N }\pi ) }  ]^{\frac{3}{2}}} - \sum_{j = 1}^{N} \frac{1- x^{-1}\cos{(\frac{2j}{N } \pi) } }{[1 + x^2 - 2x\cos{(\frac{2j }{N }\pi  )} ]^{\frac{3}{2}}}, x \in(0,1)\bigcup(1,\infty).
\end{equation}
then  $f(x) < 0$ for $x \in (1,\infty)$ and $f(x) > 0$ for $x \in (0,1)$.
\end{lemma}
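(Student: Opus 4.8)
The plan is to reduce the whole statement to a monotonicity property of a single auxiliary function. Write $\theta_j=\pi j/N$. The index sets $\{\tfrac{(2j-1)\pi}{N}:1\le j\le N\}$ and $\{\tfrac{2j\pi}{N}:1\le j\le N\}$ are exactly the odd and the even multiples of $\pi/N$ in $(0,2\pi]$, so together they range over $\theta_0,\theta_1,\dots,\theta_{2N-1}$ (identifying $\theta_{2N}=2\pi$ with $\theta_0$). Combining this with the elementary identity
\begin{equation}
\frac{1-x^{-1}\cos\phi}{(1+x^2-2x\cos\phi)^{3/2}}
=\frac{x-\cos\phi}{x\,(1+x^2-2x\cos\phi)^{3/2}}
=-\frac1x\,\frac{\partial}{\partial x}(1+x^2-2x\cos\phi)^{-1/2},
\end{equation}
one gets $f(x)=S'(x)/x$, where
\begin{equation}
S(x):=\sum_{j=0}^{2N-1}(-1)^j\,\bigl(1+x^2-2x\cos(\pi j/N)\bigr)^{-1/2}.
\end{equation}
Hence Lemma \ref{lemma4} is equivalent to the assertion that $S$ is strictly increasing on $(0,1)$ and strictly decreasing on $(1,\infty)$.

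For $x\in(0,1)$ I would expand $S$ in powers of $x$. Factoring $1+x^2-2x\cos\theta=(1-xe^{i\theta})(1-xe^{-i\theta})$ and expanding each factor by the binomial series gives, for $|x|<1$,
\begin{equation}
(1+x^2-2x\cos\theta)^{-1/2}=\sum_{a,b\ge0}c_a c_b\,x^{a+b}e^{i(a-b)\theta},\qquad c_a=\binom{2a}{a}4^{-a}>0 .
\end{equation}
Inserting this into $S$ and using that $\sum_{j=0}^{2N-1}(-1)^j e^{il\pi j/N}=\sum_{j=0}^{2N-1}\bigl(e^{i\pi(N+l)/N}\bigr)^{j}$ equals $2N$ when $l\equiv N\pmod{2N}$ and $0$ otherwise (legitimate since $N+l\in\mathbb Z$), one obtains
\begin{equation}
S(x)=2N\sum_{\substack{a,b\ge0\\ a-b\equiv N\ (\mathrm{mod}\ 2N)}}c_a c_b\,x^{a+b}.
\end{equation}
Every coefficient here is positive, and the lowest power that occurs is $x^N$, coming from $(a,b)=(N,0)$ and $(0,N)$; thus $S(x)=\sum_{m\ge N}\sigma_m x^m$ with $\sigma_m\ge0$ and $\sigma_N=4Nc_N>0$. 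The series converges for $|x|<1$, so it may be differentiated term by term on $(0,1)$, giving $S'(x)=\sum_{m\ge N}m\sigma_m x^{m-1}>0$, and therefore $f(x)=S'(x)/x>0$ on $(0,1)$.

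For $x\in(1,\infty)$ I would use scaling. From $1+x^{-2}-2x^{-1}\cos\theta=x^{-2}(1+x^2-2x\cos\theta)$ each summand of $S$ satisfies $(1+x^{-2}-2x^{-1}\cos\theta)^{-1/2}=x\,(1+x^2-2x\cos\theta)^{-1/2}$ for $x>0$, whence $S(1/x)=x\,S(x)$, i.e. $S(x)=x^{-1}S(1/x)$. Differentiating, $S'(x)=-x^{-2}S(1/x)-x^{-3}S'(1/x)$; for $x>1$ the point $1/x$ lies in $(0,1)$, where $S(1/x)>0$ (a convergent sum of positive terms) and $S'(1/x)>0$ by the previous step, so $S'(x)<0$ and $f(x)=S'(x)/x<0$. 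This completes the proof.

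The only non-routine point is the reformulation $f=S'/x$ together with the observation that the alternating sum $S$ has a power series with nonnegative coefficients; everything else is a short computation. By contrast, attacking the lemma head-on via the substitution $t=2x/(1+x^2)$, in the spirit of Lemma \ref{lemma3}, reduces it on $(0,1)$ to an inequality of the form $\sqrt{1-t^2}\,|g(t)|>|g_{1/2}(t)|$, where $g$ is the exponent-$\tfrac32$ sum of Lemma \ref{lemma3} and $g_{1/2}$ its exponent-$\tfrac12$ analogue; the right-hand side is not dominated term by term near $t=1$, so proving this inequality directly seems to be the real obstacle, which the approach above avoids.
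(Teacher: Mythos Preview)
Your argument is correct. Both you and the paper introduce the same auxiliary potential (your $S$ equals the paper's $-h$) and use the relation $f=S'/x$ (equivalently $h'=-xf$); on $(0,1)$ both expand this function as a Moeckel--Sim\'o power series in $x$ and observe that the orthogonality of the $2N$-th roots kills all terms except those with $a-b\equiv N\pmod{2N}$, leaving only nonnegative coefficients. So the $(0,1)$ half is essentially the same proof. The genuine difference is on $(1,\infty)$: the paper substitutes $t=2x/(1+x^2)$, writes $h(x)=(1+x^2)^{-1/2}g(t)$, and carries out a \emph{second} power-series expansion (now of $g$ in $t$) to show $g<0$ and $g'<0$, combining them through the identity $\tfrac{x(1+x^2)^{5/2}}{2}f(x)=(1+x^2)g(t)+\tfrac{x^2-1}{x}g'(t)$. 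Your inversion identity $S(1/x)=xS(x)$ instead reduces the $(1,\infty)$ case in two lines to the already-established positivity of $S$ and $S'$ on $(0,1)$. This is cleaner: it avoids the second expansion entirely and makes the two sign statements visibly equivalent under $x\mapsto 1/x$, whereas the paper proves them by separate mechanisms. The paper's $t$-route has the minor advantage of staying within the framework of Lemma~\ref{lemma3}, but your symmetry argument is the more economical one.
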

\begin{proof}
Let \begin{equation}
h(x) = \sum_{j = 1}^{N} \frac{1}{[1 + x^2 - 2x\cos{(\frac{2j -1}{N }\pi ) }  ]^{\frac{1}{2}}} - \sum_{j = 1}^{N} \frac{1 }{[1 + x^2 - 2x\cos{(\frac{2j }{N }\pi  )} ]^{\frac{1}{2}}},\nonumber\
\end{equation}
then  \begin{equation}\label{equation15}
\frac{dh(x)}{dx} = -xf(x)
\end{equation}

Firstly, we will prove $f(x)$ is  negative in $(1,\infty)$.

Set $t(x)=\frac{2x}{1+x^2}$, then $t \in (0,1)$ and $\frac{dt(x)}{dx}=\frac{2(1-x^2)}{(1+x^2)^2}$,$\frac{dt(x)}{dx}>0$  for $x \in (0,1)$, $\frac{dt(x)}{dx}<0$  for $x \in (1,\infty)$. Furthermore, we have $h(x)=\frac{1}{(1+x^2)^{\frac{1}{2}}}g(t)$, where \begin{equation}g(t)=\sum_{1\leq j\leq N} \frac{1}{(1-tcos(\theta_j-\frac{\pi}{N}))^{ \frac{1}{2}}}-\sum_{1\leq j\leq N} \frac{1}{(1-tcos\theta_j)^{ \frac{1}{2}}}\nonumber\end{equation}

Since
\begin{eqnarray}
g(t)& = &\sum_{1\leq j\leq N}\sum_{m\geq 0}[\frac{(\frac{1}{2})(\frac{1}{2}+1)\cdots(\frac{1}{2}+m-1)}{m!}t^m(\cos^m(\theta_j-\frac{\pi}{N})-\cos^m\theta_j)]\nonumber\\\
& = &\sum_{m\geq 0}[\frac{(\frac{1}{2})(\frac{1}{2}+1)\cdots(\frac{1}{2}+m-1)}{m!}(\frac{t^m}{2^{m-1}})]\sum_{1\leq j\leq N}\{[\cos m(\theta_j-\frac{\pi}{N})+m \cos (m-2)(\theta_j-\frac{\pi}{N})+\cdots]\nonumber\\\
&-& [\cos m(\theta_j)+m \cos (m-2)(\theta_j)+\cdots]\}\nonumber\
\end{eqnarray}
Let $a_m=[\cos m(\theta_j-\frac{\pi}{N})+m \cos (m-2)(\theta_j-\frac{\pi}{N})+\cdots]
- [\cos m(\theta_j)+m \cos (m-2)(\theta_j)+\cdots]$, then it's easy to know that $a_m\leq 0$ and the first nonzero term is $a_N=-2$, furthermore, $g(t)$ and $\frac{dg(t)}{dt}$ is  negative in
$(0,1)$.

By (\ref{equation15}), we have
 \begin{equation}\label{equation16}
\frac{x(1+x^2)^{\frac{5}{2}}f(x)}{2} = (1+x^2)g(t)+\frac{(x^2-1)}{x}\frac{dg(t)}{dt},
\end{equation}
hence  $f(x) < 0$ for $x \in (1,\infty)$.

Secondly, let us prove $f(x)$ is positive in $(0,1)$.

In the following, we will prove $h(x)$ and $\frac{dh(x)}{dx}$ are both negative in $(0,1)$ by the method of R. Moeckel and C. Simo \cite{doi:10.1137/S0036141093248414}.

Let \begin{equation}
d(z) = \frac{1 }{(1 - z )^{\frac{1}{2}}}= \sum_{m\geq 0}d_mz^{m},\nonumber\
\end{equation}
then \begin{eqnarray}
h(x)& = &\sum_{j = 1}^{N} \frac{1 }{[1 - x\exp{(\frac{\sqrt{-1}(2j-1) }{N }\pi  )} ]^{\frac{1}{2}}}\frac{1 }{[1 - x\exp{(\frac{-\sqrt{-1}(2j-1) }{N }\pi  )} ]^{\frac{1}{2}}}\nonumber\\\
&-&\sum_{j = 1}^{N} \frac{1 }{[1 - x\exp{(\frac{\sqrt{-1}2j }{N }\pi  )} ]^{\frac{1}{2}}}\frac{1 }{[1 - x\exp{(\frac{-\sqrt{-1}2j }{N }\pi  )} ]^{\frac{1}{2}}}\nonumber\\\
&=& \sum_{j = 1}^{N}\sum_{m\geq 0}d_mx^{m}\exp{(\frac{\sqrt{-1} m (2j-1) }{N }\pi  )}\sum_{m\geq 0}d_mx^{m}\exp{(\frac{-\sqrt{-1} m (2j-1) }{N }\pi  )}\nonumber\\\
&-&\sum_{j = 1}^{N}\sum_{m\geq 0}d_mx^{m}\exp{(\frac{\sqrt{-1}2 m j }{N }\pi  )}\sum_{m\geq 0}d_mx^{m}\exp{(\frac{-\sqrt{-1}2 m j }{N }\pi  )}\nonumber\\\
&=& \sum_{j = 1}^{N}\sum_{m\geq 0}x^{m}\sum_{k+l=m}d_k d_l\exp{(\frac{\sqrt{-1} (k-l) (2j-1) }{N }\pi  )}\nonumber\\\
&-&\sum_{j = 1}^{N}\sum_{m\geq 0}x^{m}\sum_{k+l=m}d_k d_l\exp{(\frac{\sqrt{-1}2 (k-l) j }{N }\pi  )}\nonumber\\\
&=& N\sum_{m\geq 0}x^{m}\sum_{k+l=m,k\equiv l(mod N)}d_k d_l \cos (\frac{k-l}{N}\pi) - N\sum_{m\geq 0}x^{m}\sum_{k+l=m,k\equiv l(mod N)}d_k d_l\nonumber\\\
&:=& \sum_{m\geq 0}b_m x^{m}
\end{eqnarray}
It's easy to know that  all of the coefficients $b_m$ are nonpositive  and infinitely many are negative, then $h(x)$ and $\frac{dh(x)}{dx}$ are both negative in $(0,1)$. Hence  $f(x) > 0$ for $x \in (0,1)$ by (\ref{equation15}).
\end{proof}

Proof of Theorem \ref{MainResult}:
 \begin{proof}
 By(\ref{equation8}), (\ref{equation9}) and lemma(\ref{lemma4}), we know (\ref{equation7}), (\ref{equation8}) and (\ref{equation9})are unsolvable, i.e., if $L  = 2N$ , $P_1,P_2\ldots, P_N; \tilde{P}_1,\tilde{P}_2\ldots, \tilde{P}_L$   can not form a central configuration.
 %Two nested regular polygon form a central configuration that ??the two regular polygon need the same vertex, i.e., the same situation of the Yu and Zhang.
 %于是由正多边形套构成中心构型, 两个正多边形必须有相同边数(顶点数), 即只能是the situation of the Yu and Zhang.
% 时无论如何构不成中心构型. 于是由正多边形套构成中心构型, 两个正多边形必须有相同边数(顶点数), 即只能是the situation of the Yu and Zhang.
\end{proof}

%\bibliographystyle{plain}

%\bibliography{C:/Users/YuXiang/Desktop/YuReference}

%\end{CJK*}
\end{document}